\documentclass[aps,prl,reprint,superscriptaddress,amsmath,amssymb,longbibliography]{revtex4-2}
\usepackage[T1]{fontenc}
\usepackage{lmodern}
\usepackage{bm}
\usepackage{graphicx}
\usepackage{amsthm}
\usepackage{hyperref}
\hypersetup{colorlinks=true,citecolor=blue,linkcolor=blue,urlcolor=blue}
\newcommand{\Tr}{\operatorname{Tr}}
\newcommand{\Ns}{N^{*}}
\newcommand{\F}{\mathbb F_2}
\newcommand{\Acc}{\mathsf{Acc}}
\newcommand{\eps}{\varepsilon_0}
\newcommand{\ket}[1]{\lvert #1\rangle}
\newcommand{\bra}[1]{\langle #1\rvert}
\newcommand{\ip}[2]{\langle #1,#2\rangle_2}
\newcommand{\prlheading}[1]{\par\indent\textit{#1}---\ignorespaces}
\newtheoremstyle{prlstatement}{3pt}{3pt}{\normalfont}{\parindent}
  {\normalfont\itshape}{---}{0pt}{}
\theoremstyle{prlstatement}
\newtheorem{theorem}{Theorem}
\theoremstyle{plain}
\newtheorem{lemma}{Lemma}
\newtheorem{proposition}{Proposition}
\begin{document}
\title{Finite-Size Limits of Adaptive Pauli Prediction}
\author{Luofei Wang}
\thanks{These authors contributed equally.}
\affiliation{School of Physics, Beijing Institute of Technology, Beijing, China}
\author{Boyu Liu}
\thanks{These authors contributed equally.}
\affiliation{Gaoling School of Artificial Intelligence, Renmin University of China, Beijing, China}
\author{Zihe Wang}
\email{wang.zihe@ruc.edu.cn}
\affiliation{Gaoling School of Artificial Intelligence, Renmin University of China, Beijing, China}
\author{Zhangqi Yin}
\email{zqyin@bit.edu.cn}
\affiliation{School of Physics, Beijing Institute of Technology, Beijing, China}
\begin{abstract}
The expectation values of a predetermined set of Pauli observables can be predicted using only a few copies of a quantum state. However, when later observables depend on earlier predictions, how the system size constrains the required number of copies remains unclear. We use classical interactive fingerprinting codes to establish this constraint. A compact Pauli-parity encoding embeds the fingerprinting interaction into the quantum prediction task, allowing any quantum prediction mechanism on the constructed diagonal states to be simulated by a classical response strategy subject to the same information constraints. We show that accurate prediction from too few copies would violate the guarantees of the fingerprinting code. For $n$ qubits and $M$ adaptively chosen Pauli observables, requiring all predictions to be simultaneously accurate to a fixed additive error with probability at least $2/3$ yields the worst-case lower bound $N^*(n,M)=\Omega\!\left(\sqrt{\min\{n,M\}}\right)$ on the required number of copies. For $M\le n$, this lower bound matches known upper bounds. In particular, $n$ adaptive predictions require $\Theta(\sqrt{n})$ copies, whereas the same number of predetermined queries requires only $O(\log n)$ copies. This separation persists even for diagonal states and commuting observables.
\end{abstract}
\maketitle

\prlheading{Introduction}
Estimating selected expectation values can require far fewer copies than
reconstructing an unknown quantum state. Full tomography determines
exponentially many parameters in a general many-qubit system, whereas
prediction can focus on the observables relevant to an experiment
\cite{Haah17,ODonnell16,Aaronson18,HKP20}.
For a fixed list of $M$ Pauli observables, a protocol combining two-copy
Bell measurements with gentle collective sign estimation uses
$O(\log(M+1))$ copies at constant accuracy and confidence
\cite{HKP21}. Existing protocols also address general observable lists
specified in advance; these guarantees make it possible to extract many
properties from a limited supply of identically prepared systems
\cite{JHL26}.

If the next observable is selected using earlier estimates, the query
list is no longer fixed in advance. Rydberg-array experiments extract
two-point occupation correlations and nonlocal $Z$-string parities
from images of prepared many-body states, where $Z_i=1-2n_i$ and
$n_i$ denotes the Rydberg occupation \cite{Semeghini21}.
For one fixed preparation, an adaptive analysis could use estimates of
$\langle Z_iZ_j\rangle$ to select a region $A$ in which to examine
$\langle\prod_{r\in A}Z_r\rangle$.
Photon-polarization entanglement-detection experiments have implemented
decision trees whose next Pauli correlation is chosen from earlier
correlation estimates \cite{Laskowski12}.
In these problems, the preparation remains fixed, while the questions
depend on previously obtained data. Classical adaptive data analysis
studies the resulting generalization problem \cite{Dwork15,Hardt14,SU15,Bassily21,Dagan22}.
Quantum prediction combines this statistical issue with the disturbance
caused by measurement \cite{AaronsonRothblum19,Badescu21}.
A guarantee on the required quantum resources must therefore hold
simultaneously for all queries in the adaptive sequence.

Huang, Lewis, Huang, and Preskill (HLHP) proved that $M$ adaptive Pauli queries on $n$ qubits can require $\Omega(\min\{\sqrt M,n^{1/3}\})$ copies, and gave a protocol using $O(\min\{\sqrt M,n\})$ copies \cite{HLHP26}. For $M=n$, these bounds leave a gap between $n^{1/3}$ and $\sqrt n$. The recent study of online shadow tomography by Chen, O'Donnell, Pelecanos, and Wright gives the further upper bound $O(\sqrt n\log(M+n+1))$ at fixed accuracy \cite{COPW26}. The unresolved size dependence concerns how much data a finite system can force an adaptive procedure to use.

We improve the lower bound to $\Omega(\sqrt{\min\{n,M\}})$. Our main approach reduces Pauli queries to the information-theoretic fingerprinting framework of Steinke--Ullman \cite{SU15}, using a representation in which every query is a single Pauli parity. For $d$ users and $L$ rounds, a computational-basis label contains $d+L$ bits. Figure~\ref{fig:mechanism} shows how the user and round registers select an encrypted bit and its key. We prove that the mechanism's complete adaptive transcript can be simulated with the same distribution using only the information available to a legal coalition. Together with the established HLHP upper bound, this gives a copy complexity of $\Theta(\sqrt n)$ for $n$ adaptive predictions, compared with only $O(\log n)$ copies for a fixed query list.

\begin{figure*}[t]
\centering
\includegraphics[width=\textwidth]{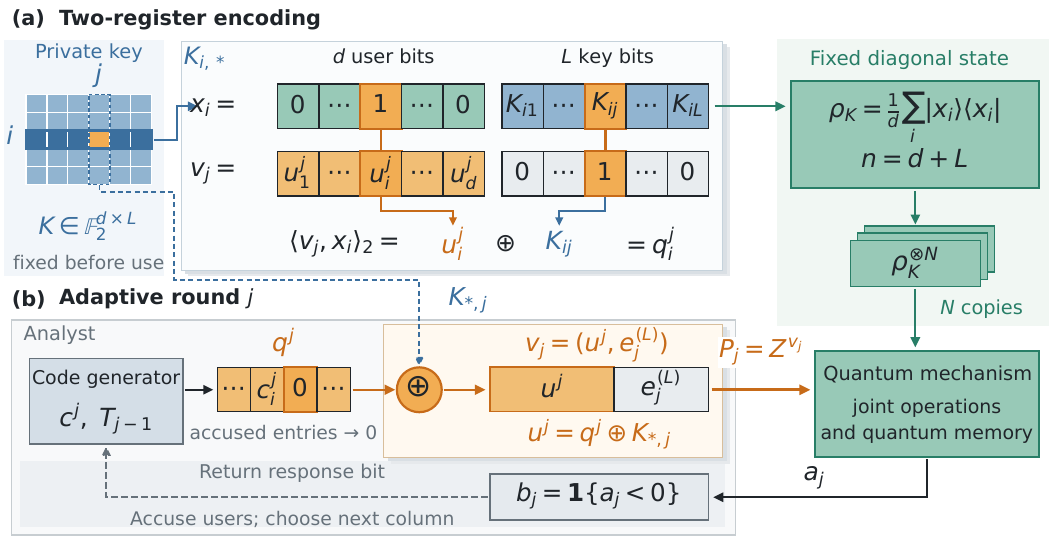}
\caption{Fixed-state realization of an adaptive fingerprinting attack
(schematic). (a) The analyst's private key matrix $K$, fixed before
interaction, defines $x_i=(e_i^{(d)},K_{i,*})$ and the diagonal
state $\rho_K$, on $d+L$ qubits before padding.
The highlighted coordinates give
$\langle v_j,x_i\rangle_2=u_i^j\oplus K_{ij}=q_i^j$.
(b) The analyst sets the code bits at accused positions to zero, masks
the resulting column, and submits one Pauli query. The mechanism
receives the initial state copies and arriving queries. Its answer
determines $b_j$, which updates the accusation set and the next code
column. Dashed arrows represent the private round pad and response
feedback. Register lengths are schematic.}
\label{fig:mechanism}
\end{figure*}

\prlheading{Prediction task}
Let $\mathcal P_n=\{I,X,Y,Z\}^{\otimes n}$. Before interaction begins, a quantum prediction mechanism receives $N$ copies of an unknown state $\rho$. Here and below, we use mechanism as shorthand for a quantum prediction algorithm to distinguish it from a classical prediction algorithm. In round $j$, an analyst submits $P_j\in\mathcal P_n$ and receives $a_j\in[-1,1]$ before selecting the next query. The analyst may know $\rho$ and the mechanism's description, and may use the complete previous transcript. The mechanism accesses $\rho$ through its initial copies and incoming queries. Arbitrary joint operations, quantum memory, and computation are allowed, and the two parties' private randomness is initially independent. For fixed $\eps=10^{-3}$, require
\begin{equation}
 \Pr\!\left[\bigcap_{j=1}^M
 \{|a_j-\Tr(P_j\rho)|\le\eps\}\right]\ge\frac23.
 \label{eq:accuracy}
\end{equation}
Write $\Ns(n,M)$ for the minimum budget of a mechanism satisfying
this condition for every fixed state and legal analyst.
Repeated queries are allowed, and logarithms have base two.

\begin{theorem}\label{thm:fixed}
There is a universal constant $c>0$ such that, for all integers
$n\ge2$ and $M\ge1$,
\begin{equation}
 \Ns(n,M)\ge c\sqrt{\min\{n,M\}}.
 \label{eq:new}
\end{equation}
The bound holds already for computational-basis diagonal states and
queries in $\{I,Z\}^{\otimes n}$.
\end{theorem}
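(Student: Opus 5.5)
We reduce to the Steinke--Ullman interactive fingerprinting code. For $d$ users and $L=\Theta(d^2)$ rounds, the code guarantees the following. Any response strategy that, in each round, sees only the code bits of a coalition of $N$ unaccused users and outputs answers that are "consistent" (agreeing with any column that is unanimous among those users) must, unless $N\ge c_0\sqrt{L}$, either be caught inconsistent in some round with constant probability, or else have all its users accused while the code keeps running. In addition, the code falsely accuses users outside the coalition only with small probability. We use the robust version, in which consistency is only required up to an additive error on the column average.

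The plan is to set $d=\Theta(\min\{n,M\})$ and $L=\Theta(d)$ so that $d+L\le n$ qubits suffice, with the remaining qubits padded in $\ket0$. The code requires roughly $L\approx d^2$ rounds, but only $\sqrt{\min\{n,M\}}$ is claimed, so we take $d\approx\sqrt{\min\{n,M\}}\cdot C$ users and $L\approx d^2\le\min\{n,M\}$ rounds. That is enough, since $\sqrt{L}$ is then of order $\sqrt{\min\{n,M\}}$.

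The construction follows the figure. The analyst samples a private key matrix $K\in\F^{d\times L}$ and sets $\rho_K=\frac1d\sum_i \ket{x_i}\bra{x_i}$ with $x_i=(e_i^{(d)},K_{i,*})$. In round $j$ the code produces a column $u^j$, with accused positions set to $0$. The analyst submits the $Z$-string $P_j$ with support vector $v_j=(u^j\oplus K_{*,j},\,e_j^{(L)})$. Then $\langle v_j,x_i\rangle_2=u_i^j$, so $\Tr(P_j\rho_K)=1-\frac2d\sum_i u_i^j$. An $\eps$-accurate answer therefore yields the column mean to accuracy $\eps/2$, and rounding gives a bit $b_j$ that is fed back to the code. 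Accuracy over all $M\ge L$ queries makes the responses robustly consistent with probability $2/3$. Each query is a single Pauli parity in $\{I,Z\}^{\otimes n}$, and the state is diagonal.

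The key step, and the main obstacle, is the simulation lemma. We must show that any quantum mechanism holding $\rho_K^{\otimes N}$ can be simulated by a classical coalition strategy with the same transcript distribution. Measuring each copy in the computational basis commutes with all queries, which are diagonal, so we may assume the copies are measured at the start. This yields $N$ i.i.d. uniform user indices $i_1,\dots,i_N$ together with the key rows $K_{i_t,*}$. This is where diagonality is essential, and it must be checked that measuring is without loss for arbitrary adaptive quantum processing. It is, because the joint state is block-diagonal in the basis and the queries carry only classical information.

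The mechanism sees the query $v_j$, which reveals $u^j\oplus K_{*,j}$. For a user $i$ outside the sample, $K_{ij}$ is a uniform one-time pad, fresh in each round $j$ because the pads are indexed by columns and never reused. So $v_j$ is uniformly random on those coordinates, independent of everything else. The simulator therefore knows only $u^j_{i_t}$ for the sampled users (coalition $S=\{i_1,\dots,i_N\}$, $|S|\le N$). It fills the other coordinates of $v_j$ with fresh uniform bits, reproducing the joint distribution exactly, and runs the mechanism. The independence of the analyst's and the mechanism's randomness, together with the fact that accused coordinates are set to zero and then masked, keeps the pad argument valid; this needs a careful conditioning check.

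With the simulator in hand, the coalition strategy answers consistently with probability at least $2/3-o(1)$ over $L$ rounds using $|S|\le N$ users. The fingerprinting guarantee then forces $N\ge c_0\sqrt{L}$, which gives $\Ns(n,M)\ge c\sqrt{\min\{n,M\}}$. The small cases $n\ge2$, $\min\{n,M\}$ constant are covered by the trivial bound $\Ns\ge1$ after shrinking $c$.
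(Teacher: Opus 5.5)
Your architecture matches the paper's. You use the same labels $x_i=(e_i^{(d)},K_{i,*})$ and masked parities $v_j$, the same diagonal simulation from sampled basis labels, the same fresh-pad argument on unsampled coordinates, $s\sim\sqrt{\min\{n,M\}}$, and $\Ns\ge1$ for bounded sizes. The gap is in the one step that produces the contradiction with the code: your claim that accuracy over all queries makes the responses (robustly) consistent. As written, this step fails.

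\emph{Wrong notion of consistency.} You define consistency as agreeing with any column that is unanimous among the coalition's users. Accuracy does not control this, because your answers estimate the mean over all $d$ users. A column that is unanimous on the coalition can have a population mean of either sign. The Steinke--Ullman notion that accuracy can deliver counts a round as inconsistent only if $b_j$ disagrees with every entry of the full column $c^j\in\F^d$. Its robustness parameter $\beta_{\mathrm F}$ tolerates a fraction of inconsistent rounds; it is not an additive error in the column average. No robustness is needed here anyway, since $\{\theta=0\}\subseteq\{\theta\le\beta_{\mathrm F}L\}$.

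\emph{Uncontrolled mean shift from masking.} Even with the correct notion, zeroing the accused positions shifts the mean. When $c^j$ is all ones, $\Tr(P_j\rho_K)=-1+2|T_{j-1}|/d$. Rounding your estimate of the column mean therefore returns the consistent bit only if $|T_{j-1}|/d$ stays well below $1/2$. Nothing in your sketch ensures this: every coalition member, and arbitrarily many non-members, may be accused.

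\emph{How the paper closes it.} It combines the false-accusation half of the guarantee, which you list but never use, with a large ratio $d=2000s$. On $\{\psi\le(d-s)/2000\}$ this gives $|T_{j-1}|/d\le 3999/(4\cdot 10^{6})$, so every all-ones column has $\mu_j\le-0.998$. Hence $\Acc_L^{(\tau)}(\rho_K)\subseteq\{\theta=0\}\cup\{\psi>(d-s)/2000\}$.

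\emph{What this requires of your proposal.}
\begin{itemize}
\item Tie $d$ to the designed coalition size $s$ with a large constant ratio, not just $d\approx C\sqrt{\min\{n,M\}}$.
\item Use a code bound on this union that is strictly below $1/3$ (the paper's $\eta_s<1/3$ for $s\ge10$), not merely a constant probability.
\item Average over $K$ to extract a fixed hard state.
\end{itemize}

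A smaller point: measuring the copies at the start loses nothing because $\rho_K^{\otimes N}$ is itself a mixture of basis states and instruments are linear. It is not because measurement commutes with the diagonal queries, since the mechanism may apply non-diagonal operations.
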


At the same fixed accuracy, protocols from existing studies give $\Ns(n,M)\le C_0\min\{\sqrt M,n\}$: the $\sqrt M$ branch follows from HLHP and the $n$ branch from the HKP full-Pauli table \cite{HLHP26,HKP21}. Consequently, $\Ns(n,M)=\Theta(\sqrt M)$ holds throughout $1\le M\le n$. The construction below explains this lower bound. The Supplemental Material gives complete proofs, uniform constants, and the conversions of the cited upper bounds \cite{Supplemental}.

\prlheading{A Pauli representation of the attack}
An interactive fingerprinting code generates a column of bits for $d$
users in each round. A coalition sees its active members' bits and
returns one response bit; the code uses these responses to accuse users,
whose future bits are withheld.
A response is inconsistent if it agrees with no entry of the full
column.
For a coalition of $s$ users, the Steinke--Ullman code has $d=2000s$
and $L\le C_Fs^2$. For $s\ge10$, the probability that all responses
remain consistent or that more than $(d-s)/2000$ nonmembers are
accused is at most $\eta_s<1/3$ \cite{SU15}.

Draw the key matrix $K\in\F^{d\times L}$ uniformly and prepare
\begin{equation}
 x_i=(e_i^{(d)},K_{i,*}),\qquad
 \rho_K=\frac1d\sum_{i=1}^d\ket{x_i}\bra{x_i}.
 \label{eq:state}
\end{equation}
Here $e_i^{(d)}$ is a coordinate vector.
The analyst knows $K$, while the mechanism receives copies of $\rho_K$.
Let $T_{j-1}$ be the users accused before round $j$.
Set $q_i^j=c_i^j$ for unaccused users and $q_i^j=0$ otherwise, where
$c^j$ is the code column. The query is
\begin{equation}
 u^j=q^j\oplus K_{*,j},\qquad
 v_j=(u^j,e_j^{(L)}),\qquad P_j=Z^{v_j}.
 \label{eq:query}
\end{equation}
The first register selects $u_i^j$, and the round register selects
$K_{ij}$. Their binary sum cancels the mask. With
\mbox{$Z^v=\bigotimes_r Z^{v_r}$}, this gives
\begin{equation}
 \begin{split}
 \ip{v_j}{x_i}&=u_i^j\oplus K_{ij}=q_i^j,\\
 \mu_j:=\Tr(P_j\rho_K)&=\frac1d\sum_i(-1)^{q_i^j}.
 \end{split}
 \label{eq:decode}
\end{equation}
The mechanism's answer $a_j$ supplies the fingerprinting-code response
$b_j=\mathbf1\{a_j<0\}$.

The label length is $d+L$, compared with $\lceil\log_2d\rceil+2dL$ in HLHP's Pauli construction. Steinke--Ullman already obtained $O(s^2)$-bit hard instances for unrestricted statistical queries using records $(i,sk_i)$ \cite{SU15}; independent round masks instantiate this key framework, as~in~Ref.~\cite{NSSSU18}. Here the user-index register uses one-hot encoding, making decryption a single Pauli parity. For exact representations valid for all users, key rows, ciphertext columns, and rounds, a rank count of the coordinate functions gives the minimum length $d+L$. Allowing a known query sign reduces this length to $d+L-1$. This representation-specific statement and its proof are given as a proposition in the Supplemental Material.

\prlheading{Derivation of the lower bound}
To prove Theorem~\ref{thm:fixed}, we must turn any quantum prediction
algorithm into a legal fingerprinting coalition. First draw an ordered
tuple $J=(J_1,\ldots,J_s)$ of independent uniform users. The diagonal
input has the exact decomposition
\begin{equation}
 \rho_K^{\otimes s}
 =\mathbb E_J\!\left[\bigotimes_{k=1}^s
 \ket{x_{J_k}}\bra{x_{J_k}}\right].
 \label{eq:diagonalinput}
\end{equation}
For every measurable transcript event, the corresponding quantum
instrument is linear in the input state.
A classical simulator can therefore reproduce the mechanism's causal
output probabilities from the sampled labels, even for joint
measurements and quantum memory. When the past answers agree, the
analyst selects the same next query, so this equality extends through
the adaptive interaction.

The simulator must also respect the coalition's information.
Let $S=\{J_1,\ldots,J_s\}$ and pad it deterministically to a coalition
$S^+$ of size $s$,
independently of the code generator. Sample the key rows belonging to
$S$. At an active sampled position, the coalition forms
$u_i^j=c_i^j\oplus K_{ij}$ from its received code bit; at an accused
sampled position it uses $u_i^j=K_{ij}$. For the remaining positions,
write $\bar S=[d]\setminus S$. Condition on $J$, sampled key rows,
the generator and simulator random seeds, and the unsampled key
columns used before round $j$; call this information
$\mathcal H_{j-1}$. Causality makes $q^j$ determined by this information,
while $K_{\bar S,j}$ remains independent and uniform. Consequently,
for every $y\in\F^{d-|S|}$,
\begin{equation}
 \Pr[u^j_{\bar S}=y\mid\mathcal H_{j-1}]
 =2^{-(d-|S|)}.
 \label{eq:freshpad}
\end{equation}
The coalition can thus generate the unsampled ciphertext with fresh
fair bits. Induction over rounds gives the same joint distribution of
sample indices, padded coalition, queries, answers, code columns, and
accusations as the real attack.
The full columns in this comparison are proof variables; the coalition
receives only its active members' bits.

Let $\theta$ count inconsistent responses and let $\psi=|T_L\setminus S^+|$ count falsely accused nonmembers. If $\psi\le(d-s)/2000$, the accused users consist of these nonmembers and at most $s$ coalition members, so $|T_{j-1}|/d\le[s+(d-s)/2000]/d=3999/4\,000\,000$. An all-zero code column has $\mu_j=1$. For an all-one code column, the accused positions contribute $+1$ rather than $-1$, giving $\mu_j=-1+2|T_{j-1}|/d\le-0.9980005$. An answer accurate to tolerance $\tau\le1/2$ has the required sign in either case, while a mixed code column accepts either response. Writing $\Acc_L^{(\tau)}(\rho_K)$ for the event of simultaneous accuracy in these rounds, we obtain, for $0\le\tau\le1/2$,
\begin{equation}
 \Acc_L^{(\tau)}(\rho_K)
 \subseteq\{\theta=0\}\cup\{\psi>(d-s)/2000\}.
 \label{eq:containment}
\end{equation}
The event on the right depends on the joint record just simulated,
so the fingerprinting guarantee implies
\begin{equation}
 \mathbb E_K\Pr[\Acc_L^{(\tau)}(\rho_K)\mid K]
 \le\eta_s<1/3.
 \label{eq:keyaverage}
\end{equation}
For every candidate mechanism, some fixed key therefore gives success
below the required $2/3$. That key fixes the state and analyst before
the run.

To fit the construction into the requested system, set $B=2000+C_F$.
Then $d+L\le2000s+C_Fs^2\le Bs^2$ and $L\le Bs^2$.
For $w=\min\{n,M\}\ge4Bs_0^2$, $s_0=10$, choose
\begin{equation}
 s=\left\lfloor\sqrt{\frac{w}{B}}\right\rfloor
 \ge\frac12\sqrt{\frac{w}{B}}\ge s_0.
 \label{eq:fit}
\end{equation}
Since $Bs^2\le w$, both budgets suffice. Pad unused qubits by $\ket0$
and unused queries by identities. A guarantee for all rounds would
imply accuracy on the attack prefix, contradicting
Eq.~\eqref{eq:keyaverage}. Thus $\Ns(n,M)>s$; smaller budgets are also
excluded because extra copies can be discarded.
For bounded $w$, the same first query $Z_1$ has means $+1$ and $-1$
on two fixed product states. A zero-copy mechanism has the same answer
distribution on both, so their disjoint accuracy intervals cannot each
have probability $2/3$. Hence $\Ns(n,M)\ge1$. Choosing a suitable universal constant $c$ now completes the proof of Eq.~\eqref{eq:new}.

\prlheading{Accuracy and query budgets}
We next examine how the copy budget grows as the tolerance decreases. Let $\Ns(n,M,\varepsilon)$ denote the same task at tolerance $\varepsilon$, so $\Ns(n,M)=\Ns(n,M,\eps)$. Write $r=\sqrt{\min\{n,M\}}$ and $\ell=\min\{n,\log(M+1)\}$.

\begin{theorem}\label{thm:precision}
There is a universal constant $c_1>0$ such that, for all
$n\ge2$, $M\ge1$, and $0<\varepsilon\le1/4$,
\begin{equation}
 \Ns(n,M,\varepsilon)\ge
 c_1\max\{r/\varepsilon,\ell/\varepsilon^2\}.
 \label{eq:precisionlower}
\end{equation}
Both terms hold for diagonal states and commuting Pauli queries.
\end{theorem}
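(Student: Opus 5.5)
We will prove the two terms separately and combine them with $\max\{A,B\}\le A+B$.

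\emph{The term $r/\varepsilon$.} We will rerun the fingerprinting reduction behind Theorem~\ref{thm:fixed}, now with a diluted state. Fix $s$ as in Eq.~\eqref{eq:fit}, so that the label length $d+L\le Bs^2\le\min\{n,M\}$. Let $\lambda=c'\varepsilon$ for a small absolute $c'$, and add one flag qubit. Define
\begin{equation}
 \rho'_K=\lambda\,\rho_K\otimes\ket0\bra0+(1-\lambda)\,\ket{0\cdots0}\bra{0\cdots0}\otimes\ket1\bra1 .
\end{equation}
Each query is kept as $P_j=Z^{v_j}$ on the label register, tensored with $Z$ on the flag. This query is still in $\{I,Z\}^{\otimes n}$, and $\Tr(P_j\rho'_K)=\lambda\mu_j-(1-\lambda)$. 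The rounds with an all-zero or all-one code column then have means separated by roughly $2\lambda$. Accuracy at tolerance $\varepsilon\le\lambda/2$ (for instance $c'=4$, after checking the margin $0.998$) therefore still forces consistent responses. The threshold becomes $b_j=\mathbf1\{a_j<-(1-\lambda)\}$.

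The main step is the simulation. The state $\rho'^{\otimes N}_K$ decomposes as a mixture over how many copies carry the flag $0$. That number is Binomial$(N,\lambda)$, and the flag-$1$ copies are a fixed known product state that the coalition can prepare itself. A mechanism with $N$ copies is therefore simulated by a coalition that samples $\mathrm{Bin}(N,\lambda)$ users. By Markov's inequality this is at most $s$ users with probability $\ge0.9$ when $N\le s/(10\lambda)$. The Steinke--Ullman guarantee applies on that event, and the containment~\eqref{eq:containment} argument gives success probability $\le\eta_s+0.1$.

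I expect the main obstacle to be the bound $\eta_s+0.1<2/3$. It will likely require re-invoking the code with a constant-factor slack, for example by taking $s$ larger than the expected number of sampled users by a factor of $10$, or by making sure $\eta_s$ is comfortably below $1/3$. The coalition size must also be fixed before the run, which we handle by padding the sampled set to exactly $s$ as before. The result is $N^*\ge s/(10\lambda)=\Omega(r/\varepsilon)$.

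\emph{The term $\ell/\varepsilon^2$.} This is a nonadaptive bound. Let $m=\lfloor\ell\rfloor$, so that $2^m\le M+1$ and $m\le n$. Consider the diagonal product states $\rho_\sigma=\bigotimes_{r=1}^m\frac12(I+4\varepsilon\sigma_r Z)$ for $\sigma\in\{\pm1\}^m$, padded with $\ket0$. The queries are the $M$ parities $Z^{v}$, $v\ne0$, padded with identities when $2^m-1<M$; these are all analyst-fixed and commute. Simultaneous accuracy at tolerance $\varepsilon$ on the single-qubit queries $Z_r$ determines $\sigma$ exactly, since the means $\pm4\varepsilon$ are separated by more than $2\varepsilon$. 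Success with probability $2/3$ thus gives a decoder of $\sigma$. By Fano's inequality the mutual information between $\sigma$ and the copies is $\Omega(m)$. The mutual information is at most $N\cdot m\cdot O(\varepsilon^2)$, because each qubit copy carries $O(\varepsilon^2)$ bits; this follows by bounding the Holevo quantity of a classical bit biased by $\varepsilon$. Hence $N=\Omega(m/\varepsilon^2)$.

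This argument uses only the $m$ single-qubit queries, so it needs $M\ge m$. That holds because $M\ge\log(M+1)$ for $M\ge1$. The case $m=0$, i.e.\ $\ell<1$, is covered by the trivial bound of Theorem~\ref{thm:fixed} together with a two-state Le Cam argument, which gives $\Omega(1/\varepsilon^2)$. Finally, we adjust constants to absorb floors and take $c_1$ to be the minimum of the constants from the two terms.
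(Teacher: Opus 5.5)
\textbf{The gap: the $\ell/\varepsilon^2$ term.} With $\rho_\sigma=\bigotimes_{r=1}^m\frac12(I+4\varepsilon\sigma_rZ)$, your two estimates read $\Omega(m)\le I(\sigma;Y)\le N\cdot m\cdot O(\varepsilon^2)$. The factor $m$ cancels, so they yield only $N=\Omega(1/\varepsilon^2)$, not $\Omega(m/\varepsilon^2)$.

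No sharper analysis of this family can help, because the family is easy. Measure $O((1+\log m)/\varepsilon^2)$ copies in the computational basis and take the sign of each qubit's empirical mean. By Hoeffding and a union bound over the $m$ qubits, this recovers $\sigma$ with probability at least $2/3$. The state is then known exactly, and every query, adaptive or not, can be answered exactly. So this family certifies at most $O((1+\log\ell)/\varepsilon^2)$.

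The missing idea is a packing whose size is exponential in $\ell$, while each hypothesis is only $O(\varepsilon^2)$ away, per copy, from a common reference. The paper takes $\omega_0=I/2^n$ and $\omega_i=(I+4\varepsilon Z^{v_i})/2^n$ for $R-1=\min\{M,2^n-1\}$ distinct nonzero $v_i$. Each alternative differs from $\omega_0$ in a single parity, so $D_2(p_i\Vert p_0)\le16\varepsilon^2/\ln2$ independently of $R$. Hence $I(V;X^N)\le\frac1R\sum_iD_2(p_i^{\otimes N}\Vert p_0^{\otimes N})\le16N\varepsilon^2/\ln2$. The decoder on the fixed list $Z^{v_1},Z^{v_2},\ldots$ returns the unique answer above $2\varepsilon$. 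Fano applied to it then requires $\Omega(\log R)=\Omega(\ell)$ bits, with a separate binary divergence test when $R<8$.

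\textbf{The $r/\varepsilon$ term.} Your argument is essentially the paper's. You dilute with a known state, let an $s$-copy simulator draw the mixture flags itself, abort when $N_{\mathrm h}>s$, and pay $\Pr[N_{\mathrm h}>s]\le N\lambda/(s+1)$ by Markov on top of $\eta_s$. The flag qubit and the shifted threshold are cosmetic, since every attack parity already equals $+1$ on $\ket{0^n}$. The step you call the main obstacle is immediate: $\eta_s<1/3$ gives $\eta_s+0.1<0.44<2/3$.

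What you must add is the range $\min\{n,M\}<4Bs_0^2$. There Eq.~\eqref{eq:fit} does not give $s\ge s_0$, so the fingerprinting guarantee is unavailable. In that range a bound $N=\Omega(1/\varepsilon^2)$ suffices, because $r$ is bounded and $1/\varepsilon^2\ge4/\varepsilon$. The paper obtains it from the identification family above; the corrected conclusion of your own hypercube chain would also supply it.
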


The adaptive term follows by mixing the hard state with a known state:
\begin{equation}
 \sigma_{K,p}=(1-p)\ket{0^n}\bra{0^n}+p\rho_K,
 \qquad p=2\varepsilon.
 \label{eq:dilution}
\end{equation}
Every attack parity equals $+1$ on $\ket{0^n}$, so its mean becomes
$1-p+p\mu_j$. Given a diluted answer $A_j$, the analyst returns to the
original scale by
\begin{equation}
 a_j=\Pi_{[-1,1]}\!\left(\frac{A_j-(1-p)}p\right),
 \qquad \varepsilon/p=1/2,
 \label{eq:rescale}
\end{equation}
where $\Pi$ clips to the interval. Clipping cannot increase the error
relative to $\mu_j\in[-1,1]$. Thus accurate diluted answers meet
exactly the tolerance used in Eq.~\eqref{eq:keyaverage}.

In the product mixture $\sigma_{K,p}^{\otimes N}$, the number of hard-state factors satisfies $N_{\mathrm h}\sim\operatorname{Bin}(N,p)$. A simulator with $s$ copies of the hard state measures their basis labels and uses them in the hard-state slots, preparing the remaining slots in the known state. It runs the candidate mechanism's quantum instruments and rescales the answers according to Eq.~\eqref{eq:rescale}. If $N_{\mathrm h}>s$, it aborts the internal simulation and answers zero thereafter. Couple the mixture flags and random seeds. Induction over rounds then shows that, whenever $N_{\mathrm h}\le s$, the queries and rescaled answers agree with those used by the analyst in the diluted problem. Writing $\Acc_M^{(\varepsilon)}(\sigma_{K,p})$ for the event that all answers on the diluted state are simultaneously accurate, Eq.~\eqref{eq:keyaverage} and Markov's inequality give
\begin{equation}
 \begin{split}
 &\mathbb E_K\Pr[\Acc_M^{(\varepsilon)}(\sigma_{K,p})\mid K]\\
 &\quad\le\eta_s+\Pr[N_{\mathrm h}>s]
 \le\eta_s+\frac{Np}{s+1}.
 \end{split}
 \label{eq:successcoupling}
\end{equation}
If $Np\le s/6$, this probability is below $1/2$, so some fixed key
violates the universal guarantee. Every valid budget must therefore
satisfy $N>s/(12\varepsilon)$. Substitution of Eq.~\eqref{eq:fit}
gives the $r/\varepsilon$ term whenever that dimension choice applies.

The identification term uses $R=\min\{M+1,2^n\}$ states:
$\omega_0=I/2^n$ and $\omega_i=(I+4\varepsilon Z^{v_i})/2^n$ for
$1\le i\le R-1$, with distinct nonzero parity vectors. These are valid
quantum states because $4\varepsilon\le1$. Pauli orthogonality gives, for
$1\le i,j\le R-1$,
\begin{equation}
 \Tr(Z^{v_j}\omega_i)=4\varepsilon\delta_{ij},\qquad
 \Tr(Z^{v_j}\omega_0)=0.
 \label{eq:identmeans}
\end{equation}
Use the same fixed query list for every state. On the simultaneous-accuracy
event, the biased entry, if present, is the unique answer exceeding
$2\varepsilon$. If no answer exceeds this threshold, the decoder returns
label zero. Accurate predictions therefore identify the state.
Let $V$ be a uniform state label, $Y$ the answer record, and $p_i$ the
basis distribution of $\omega_i$. The likelihood ratio is
$p_i(x)/p_0(x)=1+4\varepsilon(-1)^{\ip{v_i}{x}}$ for $i\ge1$.
Using $\ln z\le z-1$ gives
$D_2(p_i\Vert p_0)\le16\varepsilon^2/\ln2$, where $D_2$ is relative
entropy in bits. The diagonal simulation, data processing, and
additivity bound the information in $Y$ by $N$ times this quantity.
Decoding with error at most $1/3$ then gives, by Fano's inequality,
\begin{equation}
 \tfrac23\log R-1\le I(V;Y)
 \le\frac{16N\varepsilon^2}{\ln2}.
 \label{eq:fanochain}
\end{equation}
For $R\ge8$, the left side is at least $\tfrac13\log R$.
For $2\le R<8$, the same threshold distinguishes labels zero and one;
data processing to this binary event gives
$N\ge\ln2/(48\varepsilon^2)$. Combining the ranges yields
$N\ge\log R/(216\varepsilon^2)=\ell/(216\varepsilon^2)$.
Since $\ell\ge1$ and $\varepsilon\le1/4$, this also supplies the
adaptive term for the remaining bounded $w$.
Every universal mechanism must obey both lower bounds, proving
Theorem~\ref{thm:precision}. The Supplemental Material gives the
complete coupling, binary test, and uniform constants.

\begin{figure}[t]
\centering
\includegraphics[width=\columnwidth]{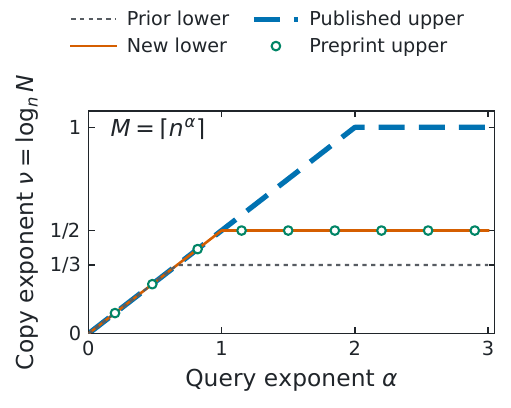}
\caption{Theoretical polynomial exponents at fixed accuracy for $M=\lceil n^\alpha\rceil$. The old lower-bound exponent is $\min\{\alpha/2,1/3\}$, and the new lower-bound exponent is $\min\{\alpha,1\}/2$. The published HLHP upper-bound exponent is $\min\{\alpha/2,1\}$. The upper-bound exponent from Chen et al.'s online shadow-tomography study coincides with the new lower-bound curve; different symbols distinguish them. This exponent plot suppresses logarithmic factors, including the remaining upper-bound factor $\log n$ for fixed $\alpha>1$. All curves are analytic bounds, not fitted data.}
\label{fig:exponent}
\end{figure}

The classical $\Omega(\sqrt M/\varepsilon^2)$ theorem in Ref.~\cite{NSSSU18} assumes both population accuracy and post hoc generalization. The latter also controls the discrepancy between sample and population averages of queries exposed by the interaction. The dimension-independent bounds compared in Refs.~\cite{LT25,COPW26} require only population accuracy and still leave a precision gap; closing it remains open even in that classical comparison. These model distinctions must be retained when comparing the present parity lower bound with lower bounds for general statistical queries.

Precision dependence must still be distinguished from the matching of upper and lower bounds at fixed accuracy. The online protocols in Ref.~\cite{COPW26} and the existing full-Pauli table method \cite{HKP21} give
\begin{equation}
 \begin{split}
 \Ns(n,M,\varepsilon)\le C_1\min\Bigl\{&
 \frac{\sqrt M}{\varepsilon^2},\frac n{\varepsilon^4},\\
 &\frac{\sqrt n}{\varepsilon^3}
 \log\!\left(M+\frac n{\varepsilon^2}+1\right)\Bigr\}.
 \end{split}
 \label{eq:precisionupper}
\end{equation}
The adaptive term in Eq.~\eqref{eq:precisionlower} leaves a precision
gap with these upper bounds.

At fixed accuracy, the identification term and the full-Pauli table method give $\Ns(n,M)=\Theta_\gamma(n)$ when $M\ge2^{\gamma n}$, for fixed $\gamma>0$. For $M=\lceil n^\alpha\rceil$, combining the lower bound with the existing upper bound gives a copy complexity of $\Theta(n^{\alpha/2})$ throughout $0\le\alpha\le1$. For fixed $\alpha>1$, combining the lower bound with the upper bound from Chen et al.'s online shadow-tomography study gives $\Omega(\sqrt n)\le\Ns(n,M)\le O_\alpha(\sqrt n\log n)$. Consequently, the polynomial exponent is
\begin{equation}
 \lim_{n\to\infty}
 \frac{\log\Ns(n,\lceil n^\alpha\rceil)}{\log n}
 =\frac12\min\{\alpha,1\}.
 \label{eq:exponent}
\end{equation}
Figure~\ref{fig:exponent} distinguishes this exponent statement from
the published upper bound.
The logarithmic interval for $\alpha>1$ remains, and intermediate
budgets can retain smaller gaps; for example, $M=\lceil n\log n\rceil$
gives $\Omega(\sqrt n)\le\Ns\le O(\sqrt{n\log n})$.

For diagonal inputs, computational-basis measurement identifies the corresponding classical problem as adaptive parity prediction from $n$-bit samples. HLHP already give a finite-dimensional lower bound for this class; Eq.~\eqref{eq:new} improves its dependence on the number of bits. For general bounded queries, we instantiate Theorem~12 of Ref.~\cite{LT25} on a $2^n$-point universe. Under the source's constant-confidence convention, this gives the fixed-accuracy sample lower bound $\Omega_\alpha(\sqrt n\,\log n)$ for $M=\lceil n^\alpha\rceil$ and each fixed $\alpha>1$ (Supplemental Material, Sec.~S10.1). This rate inherits an unspecified positive constant failure probability from the source. The full adaptive attack uses input obfuscation and removes points identified through previous answers. If each query could be realized by a single parity on $O(n)$ fixed bits while preserving this information restriction and establishing a failure probability strictly greater than $1/3$, it would contradict the task's $2/3$ success guarantee at the corresponding sample scale and thereby close the logarithmic interval. The cited reduction does not provide such a representation or confidence conversion.

The obstruction is present even when computational-basis readout
reveals every diagonal sample without measurement incompatibility.
The mechanism must still support questions selected from its own
earlier predictions of the same preparation.
The theorem concerns arbitrary Pauli weights and adaptive selection
rules; adaptive experimental correlation analysis supplies its
physical setting.
Finite system size already permits this statistical cost to appear at
linear query budgets: $n$ adaptive Pauli predictions on a system
with the same number of qubits require $\Theta(\sqrt n)$ copies, whereas a fixed
list of the same length requires only $O(\log n)$ copies.

\begin{acknowledgments}
\prlheading{Acknowledgments}
This work is supported by Beijing Institute of Technology Research Fund
Program under Grant No. 2024CX01015.
We acknowledge the use of GPT-6 Astra and Opus-5.5 to assist with theoretical calculations.
The authors have checked all mathematical derivations and theorems and
take full responsibility for the conclusions.
\end{acknowledgments}

\prlheading{Data availability}
The analytical constructions and proofs are provided in this article
and its Supplemental Material. The figures show the construction and
analytic bounds; no experimental or simulated data are used.

\makeatletter
\let\pre@bibdata\@empty
\immediate\write\@auxout{\string\citation{pauliBibStyle}}
\makeatother
\bibliographystyle{apsrev4-2}
\bibliography{references}

@CONTROL{pauliBibStyle,
  editor = {1},
  pages = {1},
  title = {0},
  year = {1}
}

@article{Haah17,
  author = {Haah, J. and Harrow, A. W. and Ji, Z. and Wu, X. and Yu, N.},
  title = {Sample-optimal tomography of quantum states},
  journal = {IEEE Trans. Inf. Theory},
  volume = {63},
  pages = {5628},
  year = {2017}
}

@inproceedings{ODonnell16,
  author = {O'Donnell, R. and Wright, J.},
  title = {Efficient quantum tomography},
  booktitle = {Proceedings of STOC 2016},
  publisher = {ACM},
  pages = {899--912},
  year = {2016}
}

@inproceedings{Aaronson18,
  author = {Aaronson, S.},
  title = {Shadow tomography of quantum states},
  booktitle = {Proceedings of STOC 2018},
  publisher = {ACM},
  pages = {325--338},
  year = {2018}
}

@article{HKP20,
  author = {Huang, H.-Y. and Kueng, R. and Preskill, J.},
  title = {Predicting many properties of a quantum system from very few measurements},
  journal = {Nat. Phys.},
  volume = {16},
  pages = {1050},
  year = {2020}
}

@article{HKP21,
  author = {Huang, H.-Y. and Kueng, R. and Preskill, J.},
  title = {{Information-Theoretic Bounds on Quantum Advantage in Machine Learning}},
  journal = {Phys. Rev. Lett.},
  volume = {126},
  pages = {190505},
  year = {2021},
  doi = {10.1103/PhysRevLett.126.190505},
  note = {The fixed-list theorem is Theorem 4 of \href{https://arxiv.org/abs/2101.02464v2}{arXiv:2101.02464v2}}
}

@misc{JHL26,
  author = {Granha Jeronimo, F. and Huang, Q. and Liu, L.},
  title = {{Dimension-Free Polylogarithmic Quantum Shadow Tomography}},
  year = {2026},
  eprint = {2608.06345v3},
  archivePrefix = {arXiv}
}

@article{Semeghini21,
  author = {Semeghini, G. and Levine, H. and Keesling, A. and Ebadi, S. and Wang, T. T. and Bluvstein, D. and Verresen, R. and Pichler, H. and Kalinowski, M. and Samajdar, R. and Omran, A. and Sachdev, S. and Vishwanath, A. and Greiner, M. and Vuleti\'c, V. and Lukin, M. D.},
  title = {Probing topological spin liquids on a programmable quantum simulator},
  journal = {Science},
  volume = {374},
  pages = {1242--1247},
  year = {2021},
  doi = {10.1126/science.abi8794}
}

@article{Laskowski12,
  author = {Laskowski, W. and Richart, D. and Schwemmer, C. and Paterek, T. and Weinfurter, H.},
  title = {{Experimental Schmidt Decomposition and State Independent Entanglement Detection}},
  journal = {Phys. Rev. Lett.},
  volume = {108},
  pages = {240501},
  year = {2012},
  doi = {10.1103/PhysRevLett.108.240501}
}

@inproceedings{Dwork15,
  author = {Dwork, C. and Feldman, V. and Hardt, M. and Pitassi, T. and Reingold, O. and Roth, A.},
  title = {Preserving statistical validity in adaptive data analysis},
  booktitle = {Proceedings of STOC 2015},
  publisher = {ACM},
  pages = {117--126},
  year = {2015}
}

@inproceedings{Hardt14,
  author = {Hardt, M. and Ullman, J.},
  title = {Preventing false discovery in interactive data analysis is hard},
  booktitle = {Proceedings of FOCS 2014},
  publisher = {IEEE},
  pages = {454--463},
  year = {2014}
}

@article{SU15,
  author = {Steinke, T. and Ullman, J.},
  title = {Interactive fingerprinting codes and the hardness of preventing false discovery},
  journal = {Proc. Mach. Learn. Res.},
  volume = {40},
  pages = {1588},
  year = {2015},
  url = {https://proceedings.mlr.press/v40/Steinke15.html},
  note = {The expanded version is \href{https://arxiv.org/abs/1410.1228v2}{arXiv:1410.1228v2}}
}

@article{Bassily21,
  author = {Bassily, R. and Nissim, K. and Smith, A. and Steinke, T. and Stemmer, U. and Ullman, J.},
  title = {Algorithmic stability for adaptive data analysis},
  journal = {SIAM J. Comput.},
  volume = {50},
  pages = {STOC16{\char45}377},
  year = {2021}
}

@inproceedings{Dagan22,
  author = {Dagan, Y. and Kur, G.},
  title = {A bounded-noise mechanism for differential privacy},
  booktitle = {Proceedings of COLT 2022},
  series = {Proc. Mach. Learn. Res.},
  volume = {178},
  pages = {625--661},
  year = {2022}
}

@inproceedings{AaronsonRothblum19,
  author = {Aaronson, S. and Rothblum, G. N.},
  title = {Gentle measurement of quantum states and differential privacy},
  booktitle = {Proceedings of STOC 2019},
  publisher = {ACM},
  pages = {322--333},
  year = {2019}
}

@article{Badescu21,
  author = {B\u{a}descu, C. and O'Donnell, R.},
  title = {Improved quantum data analysis},
  journal = {TheoretiCS},
  volume = {3},
  pages = {7},
  year = {2024},
  doi = {10.46298/theoretics.24.7}
}

@article{HLHP26,
  author = {Huang, J. and Lewis, L. and Huang, H.-Y. and Preskill, J.},
  title = {{Predicting Adaptively Chosen Observables in Quantum Systems}},
  journal = {PRX Quantum},
  volume = {7},
  pages = {010347},
  year = {2026},
  doi = {10.1103/xhn1-vnp9}
}

@misc{COPW26,
  author = {Chen, S. and O'Donnell, R. and Pelecanos, A. and Wright, J.},
  title = {{Online Shadow Tomography Matching the Classical Bounds}},
  year = {2026},
  eprint = {2607.29686v2},
  archivePrefix = {arXiv}
}

@FOOTNOTE{Supplemental,
  note = {See the \hyperref[supplement-start]{Supplemental Material} following these references for complete proofs, the encoding rank bound, and upper-bound parameter conversions; it includes Refs.~\cite{HKP21,SU15,HLHP26,COPW26,NSSSU18,LT25}}
}

@inproceedings{NSSSU18,
  author = {Nissim, K. and Smith, A. and Steinke, T. and Stemmer, U. and Ullman, J.},
  title = {The limits of post-selection generalization},
  booktitle = {Advances in Neural Information Processing Systems},
  volume = {31},
  year = {2018},
  url = {https://proceedings.neurips.cc/paper/2018/hash/77ee3bc58ce560b86c2b59363281e914-Abstract.html},
  note = {The full version is \href{https://arxiv.org/abs/1806.06100v1}{arXiv:1806.06100v1}}
}

@inproceedings{LT25,
  author = {Lyu, X. and Talwar, K.},
  title = {Fingerprinting codes meet geometry: {Improved} lower bounds for private query release and adaptive data analysis},
  booktitle = {Proceedings of STOC 2025},
  publisher = {ACM},
  year = {2025},
  doi = {10.1145/3717823.3718295},
  note = {\href{https://arxiv.org/abs/2412.14396v1}{arXiv:2412.14396v1}}
}

\clearpage
\onecolumngrid
\begingroup
\makeatletter
\let\th@prlstatement\th@plain
\renewcommand\normalsize{%
  \@setfontsize\normalsize{10.95}{13.6}%
  \abovedisplayskip 11pt plus 3pt minus 6pt
  \abovedisplayshortskip 0pt plus 3pt
  \belowdisplayshortskip 6.5pt plus 3.5pt minus 3pt
  \belowdisplayskip \abovedisplayskip
  \let\@listi\@listI}
\renewcommand{\p@subsection}{}
\makeatother
\normalsize
\setcounter{section}{0}
\setcounter{subsection}{0}
\setcounter{equation}{0}
\setcounter{theorem}{0}
\setcounter{lemma}{0}
\setcounter{proposition}{0}
\setcounter{table}{0}
\setcounter{secnumdepth}{2}
\renewcommand{\thesection}{S\arabic{section}}
\renewcommand{\thesubsection}{\thesection.\arabic{subsection}}
\renewcommand{\theequation}{S\arabic{equation}}
\renewcommand{\thetable}{S\arabic{table}}
\providecommand{\theHsection}{}
\providecommand{\theHsubsection}{}
\providecommand{\theHequation}{}
\providecommand{\theHtheorem}{}
\providecommand{\theHlemma}{}
\providecommand{\theHproposition}{}
\providecommand{\theHtable}{}
\renewcommand{\theHsection}{supp.section.\arabic{section}}
\renewcommand{\theHsubsection}{\theHsection.\arabic{subsection}}
\renewcommand{\theHequation}{supp.equation.\arabic{equation}}
\renewcommand{\theHtheorem}{supp.theorem.\arabic{theorem}}
\renewcommand{\theHlemma}{supp.lemma.\arabic{lemma}}
\renewcommand{\theHproposition}{supp.proposition.\arabic{proposition}}
\renewcommand{\theHtable}{supp.table.\arabic{table}}
\phantomsection
\label{supplement-start}
\begin{center}
{\large\bfseries Supplemental Material}\par\medskip
{\bfseries Finite-Size Limits of Adaptive Pauli Prediction}
\end{center}
\medskip

This Supplemental Material gives the complete lower-bound proofs, the
optimality argument for the universal encoding, and the parameter
conversions used for the upper bounds. We restate the task and both
theorems so that all definitions and proof dependencies are local.
Theorem numbers agree with the Letter; equations and sections have
independent supplemental numbering.

\section{Task, notation, and theorem statements}\label{sm:sec:task}
Let $\mathcal P_n=\{I,X,Y,Z\}^{\otimes n}$. A prediction mechanism
receives $N$ copies of an unknown $n$-qubit state $\rho$ before the
interaction begins. In round $j$, a classical analyst submits
$P_j\in\mathcal P_n$ and receives an answer $a_j\in[-1,1]$ before
choosing the next query. The analyst may know $\rho$ and the mechanism's
description, and may use the complete previous transcript. The mechanism
has access to $\rho$ only through its initial copies and the queries it
receives. It cannot read the analyst's private state or a classical
description of $\rho$. Arbitrary joint operations, quantum memory, and
computation are allowed. The parties' private randomness is initially
independent. Repeated queries are allowed.

For a tolerance $\varepsilon>0$, define the simultaneous accuracy event
\begin{equation}
 \Acc_M^{(\varepsilon)}(\rho)
 =\bigcap_{j=1}^M\{|a_j-\Tr(P_j\rho)|\le\varepsilon\}.
 \label{sm:eq:accuracy}
\end{equation}
The copy complexity $\Ns(n,M,\varepsilon)$ is the minimum nonnegative
integer $N$ for which one mechanism satisfies
$\Pr[\Acc_M^{(\varepsilon)}(\rho)]\ge2/3$ for every fixed state and
every legal analyst. The probability includes the parties' randomness
and quantum measurement outcomes. We set $\eps=10^{-3}$ and abbreviate
$\Ns(n,M)=\Ns(n,M,\eps)$. All logarithms have base two unless denoted
by $\ln$. The field $\F$ is the binary field, $[d]=\{1,\ldots,d\}$,
and $\ip{v}{x}$ is the binary inner product. We write
$Z^v=\bigotimes_r Z^{v_r}$, with $Z^0=I$ at each site, and
$e_i^{(d)}$ for the $i$th coordinate vector in $\F^d$.

\begin{theorem}[Fixed accuracy]\label{sm:thm:fixed}
There is a universal constant $c>0$ such that, for all integers
$n\ge2$ and $M\ge1$,
\begin{equation}
 \Ns(n,M)\ge c\sqrt{\min\{n,M\}}.
 \label{sm:eq:new}
\end{equation}
The lower bound already holds for computational-basis diagonal states
and queries in $\{I,Z\}^{\otimes n}$.
\end{theorem}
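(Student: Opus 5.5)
We reduce from the Steinke--Ullman interactive fingerprinting code with coalition size $s$, number of users $d=2000s$, and $L\le C_Fs^2$ rounds. The plan has four steps. First build a hard family of diagonal states indexed by a key matrix. Then show that any mechanism with $N\le s$ copies on that family can be turned into a legal coalition. Next show that simultaneous accuracy forces the coalition to beat the code. Finally fit the parameters into $n$ qubits and $M$ queries.

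\emph{Construction.} Draw $K\in\F^{d\times L}$ uniformly. Let $\rho_K$ be the uniform mixture of the basis states $\ket{x_i}$, where $x_i=(e_i^{(d)},K_{i,*})$. Then pad with $\ket0$ up to $n$ qubits, which needs $d+L\le Bs^2\le n$ with $B=2000+C_F$. In round $j$ the analyst takes the code column $c^j$ and zeroes the positions of users accused before round $j$, obtaining $q^j$. It masks this to $u^j=q^j\oplus K_{*,j}$ and submits $Z^{v_j}$ with $v_j=(u^j,e_j^{(L)})$. By \eqref{eq:decode}, $\Tr(Z^{v_j}\rho_K)=\frac1d\sum_i(-1)^{q_i^j}$. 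The analyst feeds $b_j=\mathbf 1\{a_j<0\}$ back to the code. All queries lie in $\{I,Z\}^{\otimes n}$, and the remaining $M-L$ queries are identities.

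\emph{Simulation by a coalition.} This is the main obstacle. We must show that the full transcript produced by an arbitrary quantum mechanism, including one with joint measurements and quantum memory, can be reproduced in law by a coalition that sees only its active members' code bits.

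We proceed in three moves. First, use the decomposition \eqref{eq:diagonalinput}: $\rho_K^{\otimes N}$ is the average over a uniformly random tuple $J$ of the product states $\bigotimes_k\ket{x_{J_k}}\bra{x_{J_k}}$. Since every instrument is linear in its input, the simulator can prepare these product states classically from the sampled labels, and the transcript distribution is unchanged. Second, take the coalition $S^+$ to be the sampled users, padded deterministically to size $s$. This is valid because $N\le s$. Third, rebuild each query from the coalition's view. On sampled positions the coalition knows the key rows and its received code bits, so it forms $u_i^j$ exactly. On unsampled positions, the key column $K_{\bar S,j}$ is fresh and uniform given the entire past $\mathcal H_{j-1}$, and $q^j$ is determined by that past. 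Hence $u^j_{\bar S}$ is uniform, as in \eqref{eq:freshpad}, and the coalition can replace it with coin flips.

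An induction over rounds then shows that the joint law of the samples, queries, answers, code columns and accusations matches the real attack. The delicate point is ordering the conditioning so that causality holds: the unused key columns must be independent of everything the mechanism has touched so far.

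\emph{Contradiction and parameters.} Suppose the number of falsely accused nonmembers is at most $(d-s)/2000$. Then the accused fraction is $|T_{j-1}|/d\le 3999/4\,000\,000$. An all-zero column has mean $1$, and an all-one column has mean at most $-0.998$. Any answer with error $\le\eps$ therefore yields a consistent response bit, which gives the containment \eqref{eq:containment}. The fingerprinting guarantee then bounds the success probability averaged over $K$ by $\eta_s<1/3$, so some fixed $K$ defeats the mechanism. Consequently $\Ns>s$ whenever $Bs^2\le\min\{n,M\}$ and $s\ge10$, so choosing $s$ as in \eqref{eq:fit} gives $\Omega(\sqrt{\min\{n,M\}})$.

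For bounded $\min\{n,M\}$, we use $Z_1$ on $\ket0$ versus $\ket1$. A zero-copy mechanism gives the same answer distribution on both states, yet the two accuracy intervals are disjoint, so it cannot succeed with probability $2/3$ on each. This forces $\Ns\ge1$. Adjusting $c$ covers both ranges.
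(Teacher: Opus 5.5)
Your proposal is correct and follows essentially the same route as the paper. The shared steps are the key-masked encoding $x_i=(e_i^{(d)},K_{i,*})$, exact diagonal simulation from sampled labels, and a legal coalition that fills the unsampled ciphertext with fresh fair bits, justified by the filtration $\mathcal H_{j-1}$; after that come the accuracy-implies-consistency containment, key averaging to fix a hard $K$, and the floor choice of $s$ together with the zero-copy argument for small $\min\{n,M\}$. The only detail you leave implicit is that at accused sampled positions the coalition sets $u_i^j=K_{ij}$, since it knows its own members' removal status; the paper states this explicitly.
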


\begin{theorem}[Precision dependence]\label{sm:thm:precision}
Set $r=\sqrt{\min\{n,M\}}$ and $\ell=\min\{n,\log(M+1)\}$.
There is a universal constant $c_1>0$ such that, for $n\ge2$, $M\ge1$,
and $0<\varepsilon\le1/4$,
\begin{equation}
 \Ns(n,M,\varepsilon)\ge
 c_1\max\{r/\varepsilon,\ell/\varepsilon^2\}.
 \label{sm:eq:precisionlower}
\end{equation}
Both terms hold already for diagonal states and commuting Pauli queries.
\end{theorem}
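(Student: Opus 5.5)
The plan follows the Letter's outline, now at full rigor. The adaptive term $r/\varepsilon$ comes from a dilution reduction to the fixed-accuracy attack, and the identification term $\ell/\varepsilon^2$ comes from a Fano argument over a packing of diagonal states.

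\textbf{Adaptive term.} Take the construction from the proof of Theorem~\ref{sm:thm:fixed}: $\rho_K$ on $d+L\le Bs^2$ qubits, padded with $\ket0$, and attack queries $Z^{v_j}$ padded with identities. We proved there that every classical coalition simulating a mechanism on $s$ sampled labels satisfies $\mathbb E_K\Pr[\Acc_L^{(\tau)}(\rho_K)]\le\eta_s<1/3$ for every $\tau\le1/2$.

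Given a mechanism $\mathcal M$ with $N$ copies that is accurate to $\varepsilon$ on every state, run it on $\sigma_{K,p}=(1-p)\ket{0^n}\bra{0^n}+p\rho_K$ with $p=2\varepsilon$. Every attack parity equals $+1$ on $\ket{0^n}$, so the true mean becomes $1-p+p\mu_j$. The analyst rescales each answer by $a_j=\Pi_{[-1,1]}((A_j-(1-p))/p)$. Clipping is a contraction onto $[-1,1]\ni\mu_j$, so an $\varepsilon$-accurate $A_j$ yields an $a_j$ with error at most $1/2$.

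Next, build a simulator holding $s$ hard copies. It samples the mixture flags $F\in\{0,1\}^N$ i.i.d.\ Bernoulli$(p)$ and fills the flagged slots with measured hard labels and the rest with $\ket0$. This is exact because $\sigma_{K,p}^{\otimes N}$ decomposes linearly over flags and labels. If the number of flags $N_{\mathrm h}$ exceeds $s$, the simulator aborts.

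Under the coupling, on the event $N_{\mathrm h}\le s$ the transcripts coincide round by round. The induction uses that the analyst's next query is a deterministic function of the past transcript and its own seed. Combined with the diagonal-simulation lemma from the proof of Theorem~\ref{sm:thm:fixed}, this gives
\[
\mathbb E_K\Pr[\Acc]\le\eta_s+\Pr[N_{\mathrm h}>s]\le\eta_s+\frac{Np}{s+1}.
\]
Three details need care. First, the fingerprinting guarantee needs the simulator to be a legal coalition when it uses fewer than $s$ labels; pad to $s$ and ignore the extras. Second, the quantity $\eta_s$ should be taken as a bound $\eta_s\le 1/3-\delta$ with a uniform margin, or else the requirement $Np\le s/6$ should be chosen so that the sum stays below $2/3$. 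Third, if the sum is below $2/3$, some fixed $K$ violates the guarantee, so $N>c's/\varepsilon$. Substituting $s\ge\tfrac12\sqrt{w/B}$ handles $w\ge4Bs_0^2$.

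\textbf{Identification term.} Use $R=\min\{M+1,2^n\}$ states, $\omega_0=I/2^n$ and $\omega_i=(I+4\varepsilon Z^{v_i})/2^n$ with distinct nonzero $v_i$, together with the fixed query list $Z^{v_1},\dots,Z^{v_{R-1}}$, padding the remaining queries. Pauli orthogonality gives means $4\varepsilon\delta_{ij}$. On the accuracy event, the threshold decoder at $2\varepsilon$ recovers the label, since accurate answers lie either at or above $3\varepsilon$, or at or below $\varepsilon$.

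All states are diagonal, so the mechanism's output is a post-processing of $N$ i.i.d.\ basis samples. Data processing and chain rule then give $I(V;Y)\le N\max_i D(p_i\Vert p_0)\le 16N\varepsilon^2/\ln2$, using $\chi^2$ or $\ln z\le z-1$. Fano's inequality with error $\le1/3$ gives $\tfrac23\log R-1\le I(V;Y)$.

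This settles $R\ge8$. For $2\le R<8$, use a binary test between $\omega_0$ and $\omega_1$ instead. Success $2/3$ on both forces total variation at least $1/3$, and Pinsker's inequality then gives $N\gtrsim1/\varepsilon^2$. Since $\log R\ge\ell$ up to constants and $\ell\ge1$, this also gives $N\gtrsim1/\varepsilon^2\ge 4/\varepsilon\ge r/\varepsilon$ up to constants for bounded $w$. That covers the small-$w$ case of the adaptive term.

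\textbf{Main obstacle.} The delicate step is the coupling in the dilution argument. The simulator must be a legal fingerprinting coalition: it sees only its active members' bits and generates the unsampled ciphertext with fresh pads. Meanwhile the random number of hard slots, the abort rule, and the adaptive feedback $b_j=\mathbf1\{a_j<0\}$ computed from rescaled answers must all be consistent with the real diluted interaction.

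I would first prove the coupling as a lemma: for a fixed mechanism and flag vector with $|F|\le s$, the joint law of queries, rescaled answers, code columns, and accusations matches the real process. The proof is by induction over rounds, reusing the fresh-pad identity~\eqref{eq:freshpad}. Taking expectation over $F$ then finishes the argument.
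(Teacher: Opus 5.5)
Your proposal is correct and follows the paper's proof. The shared steps are: dilution to $\sigma_{K,p}$ with $p=2\varepsilon$ and clipped rescaling to tolerance $1/2$; an aborting simulator on $s$ hard copies with the Markov bound $\Pr[N_{\mathrm h}>s]\le Np/(s+1)$; and a Fano/binary-test identification bound over $\omega_0,\ldots,\omega_{R-1}$, which also covers small $w$. Two minor differences: you need not re-prove a coalition-level coupling, because the aborting simulator is itself a legal $s$-copy mechanism and the already-established key-averaged bound at $\tau=1/2$ applies to it directly; and your Pinsker step in the binary case is an equivalent substitute for the paper's Bernoulli-divergence comparison.
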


Theorem~\ref{sm:thm:fixed} follows from four lemmas: the encoding realizes a
Pauli parity, diagonal inputs admit exact classical simulation, that
simulation respects the fingerprinting coalition's information, and
accurate answers imply consistent code responses. For
Theorem~\ref{sm:thm:precision}, a finite-budget dilution argument gives the
adaptive term, while identification among weakly biased diagonal states
gives the second term.

\section{Interactive fingerprinting codes and constants}\label{sm:sec:code}
An interactive fingerprinting code generates a column
$c^j\in\F^d$ in each round $j$. A coalition $S^+\subseteq[d]$ of
exactly $s$ users is selected before the first round, independently of
the generator's private seed. Let $T_{j-1}$ be the set of all users
accused before round $j$, with $T_0=\varnothing$. The coalition receives
only $c_i^j$ for its active members $i\in S^+\setminus T_{j-1}$ and
returns one bit $b_j$. The generator may then accuse additional users;
accused coalition members receive no subsequent code bits. A response
is inconsistent if it agrees with no entry of the full column,
including entries outside the coalition. Define
\begin{equation}
 \theta=\sum_{j=1}^L
 \mathbf1\{b_j\notin\{c_i^j:i\in[d]\}\},\qquad
 \psi=|T_L\setminus S^+|.
 \label{sm:eq:codestatistics}
\end{equation}
Thus $\psi$ counts false accusations, and a column containing both
symbols admits either response.

We use Definition 6 and Theorem 7 in the COLT version of
Steinke--Ullman~\cite{SU15}; these correspond to Definition 2.1 and Theorem 2.2
in Steinke and Ullman's study of interactive fingerprinting codes. For a tolerated inconsistency fraction
$0\le\beta_{\mathrm F}<1/2$ and a false-accusation parameter $\delta_F$, the theorem
bounds the union of the events $\theta\le\beta_{\mathrm F} L$ and
$\psi>\delta_F(d-s)$, with
\begin{align}
 L&=O\!\left(\frac{s^2\log(1/\delta_F)}{(1/2-\beta_{\mathrm F})^4}\right),
 \label{sm:eq:sulength}\\
 \eta&\le\min\{\delta_F(d-s),2^{-\Omega(\delta_F(d-s))}\}
       +\delta_F^{\Omega((1/2-\beta_{\mathrm F})s)}.
 \label{sm:eq:sufailure}
\end{align}
More explicitly, the soundness and completeness proofs in COLT
Theorems 15 and 26 (Theorems 2.8 and 2.19 in that study) give
$\min\{\delta_F(d-s),e^{-\delta_F(d-s)/8}\}$ and
$\delta_F^{(1/2-\beta_{\mathrm F})s/2}$, respectively, when
\mbox{$(1/2-\beta_{\mathrm F})s\ge1$}. Choose
\begin{equation}
 d=2000s,\qquad \beta_{\mathrm F}=1/4,\qquad \delta_F=1/2000.
 \label{sm:eq:fixedcode}
\end{equation}
This strictly positive choice lies in the construction's parameter
range. Since $\{\theta=0\}\subseteq\{\theta\le L/4\}$, the same
guarantee bounds perfect consistency. There is a universal $C_F>0$ such that
\begin{equation}
 \begin{split}
 L&\le C_Fs^2,\\
 \Pr[\theta=0\ \text{or}\ \psi>(d-s)/2000]
 &\le\eta_s
 \le e^{-1999s/16000}+2000^{-s/8}<1/3
 \quad(s\ge s_0),
 \end{split}
 \label{sm:eq:codeparameters}
\end{equation}
where one may take $s_0=10$. The completeness condition is $s/4\ge1$.
Both terms decrease with $s$, and their sum at $s=10$ is less than
$0.287$. Relabeling the code alphabet
$\{-1,1\}$ as $\{0,1\}$ preserves consistency. The guarantee applies
to every legal coalition strategy, with unrestricted internal
computation.

\section{Additive parity encoding and its universal optimum}\label{sm:sec:encoding}
The code must be presented as Pauli queries while hiding the current
code bits of users absent from the sample. Before the interaction,
choose a uniform binary matrix $K\in\F^{d\times L}$ and form
\begin{equation}
 x_i=(e_i^{(d)},K_{i,*}),\qquad
 \rho_K=\frac1d\sum_{i=1}^d\ket{x_i}\bra{x_i}.
 \label{sm:eq:state}
\end{equation}
Each label contains $d+L$ bits. The analyst knows the entire key matrix;
the mechanism receives only copies of the fixed state $\rho_K$.
In round $j$, define the masked plaintext and its encryption by
\begin{equation}
 q_i^j=\begin{cases}c_i^j,&i\notin T_{j-1},\\0,&i\in T_{j-1},\end{cases}
 \qquad u^j=q^j\oplus K_{*,j},\qquad
 v_j=(u^j,e_j^{(L)}),\qquad P_j=Z^{v_j}.
 \label{sm:eq:query}
\end{equation}
The analyst sends $P_j$ to the mechanism and turns its answer into
$b_j=\mathbf1\{a_j<0\}$. The code generator receives this bit alone.
Setting the plaintext to zero at accused positions ensures that later
queries can be generated without those users' new code bits.

\begin{lemma}[Parity realization]\label{sm:lem:parity}
For every key matrix, code column, and accusation set, the labels and
queries in Eqs.~\eqref{sm:eq:state} and \eqref{sm:eq:query} satisfy
\begin{equation}
 \ip{v_j}{x_i}=u_i^j\oplus K_{ij}=q_i^j.
 \label{sm:eq:decode}
\end{equation}
Consequently, each query is a single Pauli observable and has mean
\begin{equation}
 \mu_j:=\Tr(P_j\rho_K)=\frac1d\sum_{i=1}^d(-1)^{q_i^j}.
 \label{sm:eq:mean}
\end{equation}
\end{lemma}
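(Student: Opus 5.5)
The proof is a direct computation in two steps: first the binary inner product, then the Pauli expectation.

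\emph{Inner product.} Split $v_j$ and $x_i$ into the user register (length $d$) and the round register (length $L$). Then
\[
\ip{v_j}{x_i}=\ip{u^j}{e_i^{(d)}}\oplus\ip{e_j^{(L)}}{K_{i,*}} .
\]
The first term selects coordinate $i$ of $u^j$, which is $u_i^j=q_i^j\oplus K_{ij}$ by Eq.~\eqref{sm:eq:query}. The second term selects coordinate $j$ of the key row, namely $K_{ij}$. Adding them over $\F$ cancels the key bit, since $K_{ij}\oplus K_{ij}=0$, and leaves $q_i^j$. This holds for every key, code column, and accusation set, because $q^j$ enters only as an arbitrary vector in $\F^d$.

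\emph{Mean.} Each label state $\ket{x_i}$ is a computational-basis state, and $Z^{v}$ is diagonal in that basis with $Z^{v}\ket{x}=(-1)^{\ip{v}{x}}\ket{x}$. This follows factor by factor from $Z\ket{b}=(-1)^b\ket{b}$ and $Z^0=I$. So $\Tr(P_j\ket{x_i}\bra{x_i})=(-1)^{q_i^j}$. Averaging over $i$ by linearity of the trace, using Eq.~\eqref{sm:eq:state}, gives Eq.~\eqref{sm:eq:mean}.

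Finally, $P_j=Z^{v_j}$ is by definition a single element of $\{I,Z\}^{\otimes(d+L)}\subseteq\mathcal P_{d+L}$, so each query is one Pauli observable. There is no real obstacle here; the only care needed is the bookkeeping of which register each unit vector lives in.
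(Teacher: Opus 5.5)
Your proof is correct and follows the paper's argument. The paper likewise splits the inner product into the user and round blocks, which select $u_i^j$ and $K_{ij}$ and sum over $\F$ to $q_i^j$. It then uses $Z^v\ket{x}=(-1)^{\ip{v}{x}}\ket{x}$ and averages over the uniform mixture $\rho_K$ to obtain the mean.
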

\begin{proof}
The first block of the inner product selects $u_i^j$, because
$\ip{u^j}{e_i^{(d)}}=u_i^j$. The second block selects $K_{ij}$,
because $\ip{e_j^{(L)}}{K_{i,*}}=K_{ij}$. Their sum in $\F$ is
$u_i^j\oplus K_{ij}=q_i^j$. Since
$Z^v\ket{x}=(-1)^{\ip{v}{x}}\ket{x}$, averaging these eigenvalues
in the mixture \eqref{sm:eq:state} proves Eq.~\eqref{sm:eq:mean}.
\end{proof}

The Pauli construction in Ref.~\cite{HLHP26} uses
$\lceil\log_2d\rceil+2dL$ qubits. Here the number of qubits per copy
is $d+L$; the analyst still stores the $dL$-bit key matrix.
Steinke--Ullman already use data $(i,sk_i)$ and obtain
$O(s^2)$-bit information-theoretic hard instances for unrestricted
statistical queries with sufficiently long keys~\cite{SU15}.
The independent round pads instantiate that existing key framework.
The present representation makes each masked query a single parity.
The joint-law argument in Sec.~\ref{sm:sec:simulation} verifies this
representation within the same fingerprinting framework.

To characterize the size required by this interface, consider maps
$E:[d]\times\F^L\to\F^m$ and
$V:\F^d\times[L]\to\F^m$ satisfying
\begin{equation}
 \ip{V(u,j)}{E(i,k)}=u_i\oplus k_j
 \quad\text{for every }i\in[d],\ k\in\F^L,\ u\in\F^d,\ j\in[L].
 \label{sm:eq:universal}
\end{equation}
The row encoder sees only an identifier and its key row; the query
encoder sees only a ciphertext column and its round. The maps may be
nonlinear, and the identity must hold before any key matrix is sampled.

\begin{proposition}[Universal exact encoding]\label{sm:prop:encoding}
For $d,L\ge1$, the minimum $m$ in Eq.~\eqref{sm:eq:universal} is
$d+L$. If a query-dependent sign correction is allowed, the minimum
is $d+L-1$.
\end{proposition}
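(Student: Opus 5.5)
The plan is to turn Eq.~\eqref{sm:eq:universal} into a rank statement over $\F$. Regard the target as a matrix $F$ whose rows are indexed by query data $(u,j)\in\F^d\times[L]$ and whose columns are indexed by row data $(i,k)\in[d]\times\F^L$, with entries $F_{(u,j),(i,k)}=u_i\oplus k_j$. Any maps $E,V$ satisfying Eq.~\eqref{sm:eq:universal} factor $F=V^{\mathsf T}E$ through $\F^m$, so $m\ge\operatorname{rank}_{\F}F$. This holds no matter how nonlinear $E$ and $V$ are, because only the table of their values enters. Equivalently, every row function $g_{u,j}(i,k)=u_i\oplus k_j$ of $F$ lies in the span of the $m$ coordinate functions $(i,k)\mapsto E(i,k)_r$.

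\emph{Rank computation.} I will identify the row space of $F$ as a space of functions on $[d]\times\F^L$.

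First, taking $u=0$ gives the $L$ key-coordinate functions $\kappa_j(i,k)=k_j$.

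Second, taking $u=e_{i'}^{(d)}$ and subtracting $\kappa_j$ gives the user indicators $\chi_{i'}(i,k)=\mathbf 1\{i=i'\}$.

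Third, every $g_{u,j}=\sum_{i'}u_{i'}\chi_{i'}\oplus\kappa_j$ lies in the span of these $d+L$ functions. Hence the row space is exactly $D=\operatorname{span}\{\chi_1,\ldots,\chi_d,\kappa_1,\ldots,\kappa_L\}$.

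These functions are linearly independent. Suppose $\sum_i a_i\chi_i\oplus\sum_j b_j\kappa_j\equiv0$. Evaluating at $k=0$ forces every $a_i=0$, and evaluating at $k=e_j^{(L)}$ then forces $b_j=0$. Thus $\operatorname{rank}F=d+L$, which gives the lower bound $m\ge d+L$. The encoding of Lemma~\ref{sm:lem:parity}, namely $E(i,k)=(e_i^{(d)},k)$ and $V(u,j)=(u,e_j^{(L)})$, attains it.

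\emph{Sign correction.} I interpret a query-dependent sign as allowing $\ip{V(u,j)}{E(i,k)}\oplus\sigma(u,j)=u_i\oplus k_j$ for some function $\sigma$; physically this is the query $(-1)^{\sigma}Z^{V}$. Then the matrix $F\oplus\sigma\mathbf 1^{\mathsf T}$ has rank at most $m$. Its row space $W'$ satisfies $W'+\operatorname{span}\{\mathbf 1\}\supseteq D$. The constant function $\mathbf 1=\sum_i\chi_i$ already lies in $D$, so $D$ is the full space reachable this way, and therefore $m\ge\dim W'\ge\dim D-1=d+L-1$.

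For the matching construction, drop the indicator of user $d$:
\begin{equation}
 E(i,k)=\bigl(e_i^{(d)}\ \text{restricted to the first } d-1 \text{ coordinates},\,k\bigr),
 \qquad
 V(u,j)=\bigl((u_{i'}\oplus u_d)_{i'<d},\,e_j^{(L)}\bigr),
 \qquad
 \sigma(u,j)=u_d.
\end{equation}
For $i<d$ this gives $u_i\oplus u_d\oplus k_j\oplus u_d=u_i\oplus k_j$. For $i=d$ the first block vanishes, leaving $k_j\oplus u_d$, which is again correct. This uses $d-1+L$ qubits. The case $d=1$ is consistent with this construction, since the first block is then empty.

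\emph{Main obstacle.} The only delicate point is justifying that arbitrary nonlinear maps cannot beat the rank. I expect this to be settled by the factorization argument: the identity is required pointwise for all arguments before any key is sampled, so the full finite table $F$ equals a product of a $|\text{rows}|\times m$ matrix and an $m\times|\text{columns}|$ matrix. I will also state explicitly that in the sign-corrected model a sign may depend only on $(u,j)$. If a sign could depend on $(i,k)$ as well, one could fold the constant into $E$ instead, and the count would need separate treatment.
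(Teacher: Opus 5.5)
Your proof is correct and takes essentially the same route as the paper. The paper shows that the span of the coordinate functions $E_1,\ldots,E_m$ (together with the constant function in the signed case) must contain the $d+L$ linearly independent functions $\mathbf 1\{i=r\}$ and $k_j$; your factorization $F=V^{\mathsf T}E$ and row-space computation are the matrix form of that argument. For the upper bounds, the paper uses the same two constructions you give: the encoding of Lemma~\ref{sm:lem:parity}, and the signed encoding that omits user $d$'s indicator and sets the sign to $u_d$.
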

\begin{proof}
Regard the coordinates $E_1,\ldots,E_m$ as binary-valued functions on
$[d]\times\F^L$. Their linear span contains
$f_{u,j}(i,k)=u_i\oplus k_j$, hence contains
\begin{equation}
 \zeta_j(i,k)=k_j=f_{0,j}(i,k),\qquad
 \iota_r(i,k)=\mathbf1\{i=r\}
 =f_{e_r,j}(i,k)\oplus f_{0,j}(i,k).
 \label{sm:eq:rankfunctions}
\end{equation}
These $d+L$ functions are linearly independent over $\F$.
Indeed, in $\sum_r\lambda_r\iota_r+\sum_j\xi_j\zeta_j=0$, setting $k=0$
and then $i=r$ gives each $\lambda_r=0$; varying $k$ then gives each
$\xi_j=0$. Their span therefore has dimension $d+L$, so $m\ge d+L$.
Equality is attained by $E(i,k)=(e_i^{(d)},k)$ and
$V(u,j)=(u,e_j^{(L)})$.

A query-dependent sign $(-1)^{h(u,j)}$ changes the required binary
identity to $\ip{V(u,j)}{E(i,k)}\oplus h(u,j)=u_i\oplus k_j$.
The span of $1,E_1,\ldots,E_m$ contains the same $d+L$ independent
functions, giving $m+1\ge d+L$. To attain equality, omit $\iota_d$ and use
\begin{equation}
 \begin{split}
 E(i,k)&=(\iota_1(i,k),\ldots,\iota_{d-1}(i,k),k),\\
 V(u,j)&=(u_1\oplus u_d,\ldots,u_{d-1}\oplus u_d,e_j^{(L)}),
 \qquad h(u,j)=u_d.
 \end{split}
 \label{sm:eq:signedencoding}
\end{equation}
For $i<d$, the offset cancels the extra $u_d$; for $i=d$, the
identifier block is zero and the offset supplies $u_d$. This also
covers $d=1$, when that block is empty.
\end{proof}

Proposition~\ref{sm:prop:encoding} gives the exact optimum for the
universal interface \eqref{sm:eq:universal}. An encoding specialized
to an already fixed key matrix has different quantifiers; the rank
argument uses all possible key rows.

\section{Exact simulation of the adaptive interaction}\label{sm:sec:simulation}
The parity identity gives the correct mean. Applying the fingerprinting
code additionally requires that the mechanism's view can be produced
using only a legal coalition's information. We establish this for the
complete adaptive record.

\begin{lemma}[Diagonal simulation]\label{sm:lem:diagonal}
For a computational-basis diagonal state, an arbitrary quantum mechanism
can be simulated from independent classical basis labels with exactly
the same complete adaptive transcript law.
\end{lemma}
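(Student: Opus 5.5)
The plan is to use linearity of quantum operations together with the convex decomposition of a diagonal product state into computational-basis labels. Write the state as $\rho=\sum_x p(x)\ket{x}\bra{x}$. Then
\begin{equation}
 \rho^{\otimes N}=\sum_{x^{(1)},\ldots,x^{(N)}}\prod_{t=1}^N p(x^{(t)})\,\bigotimes_{t=1}^N\ket{x^{(t)}}\bra{x^{(t)}},
\end{equation}
which is the law of $N$ independent labels drawn from $p$. The simulator samples these labels classically, prepares the corresponding product basis state $\ket{x^{(1)}\cdots x^{(N)}}$ in its own register, and then runs the candidate mechanism's description verbatim on that register: the same initial channel, the same quantum memory, and the same round-by-round instruments. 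The mechanism's own private randomness is drawn with the same law, independently of the analyst.

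The main step is an induction over rounds on the joint law of the complete transcript $(P_1,a_1,\ldots,P_j,a_j)$. I model the mechanism in round $j$ as a quantum instrument $\{\mathcal I^{(j)}_{P,a}\}_a$, a family of completely positive maps indexed by the received query $P$ and summing over $a$ to a trace-preserving map. It acts on the post-measurement memory state left by earlier rounds. Answers valued in $[-1,1]$ can be handled by a measurable instrument, or by discretizing, since only measurable transcript events matter. For a fixed analyst strategy, including its private seed, the probability of any transcript event $E$ is
\begin{equation}
 \Tr\!\Big[\big(\mathcal I^{(j)}_{P_j,a_j}\circ\cdots\circ\mathcal I^{(1)}_{P_1,a_1}\big)(\rho^{\otimes N})\Big],
\end{equation}
integrated over the answers with the queries $P_k$ determined by the analyst's seed and earlier answers. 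Each composed map is linear in its input. Substituting the decomposition above therefore expresses this probability as $\mathbb E_{x^{(1:N)}}$ of the same expression evaluated on the pure basis input. That expectation is exactly the transcript probability of the simulator, which first samples the labels and then runs the instruments.

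Adaptivity is the point where care is needed. The queries are not fixed in advance, so I would condition on the analyst's seed and proceed by induction. At round $j$, suppose the prefix laws agree. Conditioned on a common prefix, the analyst issues the same $P_j$ in both worlds. The conditional answer law is then obtained by applying the same instrument to the unnormalized branch state, which is linear in the input. The unnormalized branch states average correctly, so the joint law including $a_j$ agrees.

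Independence of private randomness matters here. The analyst's seed is independent of the sampled labels and of the mechanism's seed, so averaging over the seed commutes with the label average. This gives equality of the complete transcript law, including any analyst-side variables that are functions of the transcript and the seed.

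I expect the main obstacle to be formal rather than conceptual. It consists of stating the instrument model for continuous answers and quantum memory cleanly, so that the linearity-plus-Fubini exchange is justified for every measurable event. The natural route is to define the transcript law through the sequential composition of unnormalized CP maps, as in the displayed expression. Once that is in place, the induction is routine. The simulator's access is exactly $N$ independent samples from $p$ plus the queries, which is the property later lemmas use to bound the coalition's information.
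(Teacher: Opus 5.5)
Your proposal is correct and follows the paper's argument: decompose $\rho^{\otimes s}$ as the average of product basis states over i.i.d.\ labels, use linearity of the mechanism's instruments, and induct over rounds, using that the analyst applies the same next-query kernel to a common past transcript. The only difference is presentational: the paper phrases the simulator as sampling, for each pure-input tuple, from the mechanism's causal conditional output kernels (realized with uniform seeds because $[-1,1]$ is standard Borel), which makes it visibly a classical randomized strategy for the later coalition lemma, whereas you physically re-run the instruments, which induces exactly the same kernels.
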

\begin{proof}
Write $\rho=\sum_xp_x\ket{x}\bra{x}$. Draw $s$ independent labels
$X_1,\ldots,X_s$ from $p$ and prepare their basis states. Their
averaged input is
\begin{equation}
 \mathbb E\!\left[\bigotimes_{k=1}^s\ket{X_k}\bra{X_k}\right]
 =\rho^{\otimes s}.
 \label{sm:eq:diagonalinput}
\end{equation}
For each ordered pure-input tuple, a simulator uses the original
mechanism's causal conditional output kernels. It needs only the tuple,
the mechanism's description, and queries already received. Linearity of
quantum instruments implies equality of the probabilities of all
measurable transcript events after averaging over the tuple. This
remains true when each next query is chosen using earlier answers:
conditioned on the same past transcript, the analyst uses the same
next-query kernel in the two experiments. Induction over rounds gives
the complete transcript law.

Answers lie in the standard Borel space $[-1,1]$, so independent
uniform random seeds can realize these kernels. Arbitrary computation
is permitted. A description of $p$ is not an additional simulator input.
For $\rho_K$, the labels are $x_{J_1},\ldots,x_{J_s}$ with
$J_1,\ldots,J_s$ independent and uniform on $[d]$.
\end{proof}

Retain the ordered index tuple $J=(J_1,\ldots,J_s)$, including
repetitions, and let $S=\{J_1,\ldots,J_s\}$. Pad $S$ deterministically
to a set $S^+$ of exactly $s$ users. A legal coalition samples the rows
$K_{i,*}$ for $i\in S$ and runs the simulator on the ordered labels.
For an active sampled user it sets $u_i^j=c_i^j\oplus K_{ij}$;
for an accused sampled user it sets $u_i^j=K_{ij}$. At every
unsampled position it uses a fresh independent fair bit. It submits
$P_j=Z^{(u^j,e_j^{(L)})}$ to the simulator and returns
$b_j=\mathbf1\{a_j<0\}$ to the generator.

\begin{lemma}[Legal coalition]\label{sm:lem:coalition}
This coalition strategy and the real attack have the same joint law
of the sample indices, padded coalition, full code columns, accusations,
queries, answers, and response bits.
\end{lemma}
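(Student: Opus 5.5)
We will prove Lemma~\ref{sm:lem:coalition} by coupling the two experiments and inducting over rounds. Both experiments use the same ordered tuple $J$, the same deterministic padding $S^+$, the same generator seed, and the same simulator seeds. In the real attack, Lemma~\ref{sm:lem:diagonal} lets us replace the quantum mechanism on $\rho_K^{\otimes s}$ by the simulator run on the labels $x_{J_1},\ldots,x_{J_s}$ without changing the transcript law. After this replacement, the only difference between the experiments lies in how $u^j$ is produced. The real attack sets $u^j=q^j\oplus K_{*,j}$ with the full key. The coalition uses the true key bits on $S$ and fresh fair bits on $\bar S$. The simulator's input depends only on the key rows $K_{S,*}$, since the labels are $x_i=(e_i^{(d)},K_{i,*})$ for $i\in S$. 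We therefore need to show that the joint law of everything is identical when the real unsampled pads are replaced by fresh bits.

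The induction hypothesis is stated for the history $\mathcal H_{j-1}$. It contains $J$, $K_{S,*}$, the generator and simulator seeds, the code columns $c^1,\ldots,c^{j-1}$, the accusation sets $T_0,\ldots,T_{j-1}$, the queries $u^1,\ldots,u^{j-1}$, the answers, and the responses. In the real attack we also include the unsampled key columns $K_{\bar S,1},\ldots,K_{\bar S,j-1}$. The hypothesis is that the history law agrees across the two experiments. The fact that $K_{\bar S,j-1}$ appears only in the real attack is harmless, because we compare marginals that exclude it.

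For the inductive step, note first that $c^j$ and $T_{j-1}$ are functions of the generator seed and $b_1,\ldots,b_{j-1}$, so $q^j$ is $\mathcal H_{j-1}$-measurable. On $S$, both experiments produce the same $u^j_S$. An active sampled user has $q_i^j=c_i^j$, giving $u_i^j=c_i^j\oplus K_{ij}$. An accused sampled user has $q_i^j=0$, giving $u_i^j=K_{ij}$. These match the coalition's rule, and the coalition receives $c_i^j$ exactly for its active members, so the strategy is legal.

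On $\bar S$, the column $K_{\bar S,j}$ is uniform and independent of $\mathcal H_{j-1}$. It is drawn independently of the code, and neither the simulator, earlier queries, nor earlier columns depend on it. Hence $u^j_{\bar S}=q^j_{\bar S}\oplus K_{\bar S,j}$ is uniform and independent of $\mathcal H_{j-1}$, which is Eq.~\eqref{eq:freshpad}. This matches the coalition's fresh fair bits.

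The simulator's next-answer kernel depends only on the labels, its seed, and the queries so far. The generator's update depends only on $b_j$. Therefore the conditional laws of $(a_j,b_j,c^{j+1},T_j)$ coincide, which closes the induction. Taking the marginal after round $L$ gives equality of the stated joint law, and the full columns $c^j$ are included as proof variables.

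The main obstacle is the independence claim for $K_{\bar S,j}$ given $\mathcal H_{j-1}$. It requires care in three respects. $S$ is random and determined by $J$, so we first condition on $J$, under which $\bar S$ is fixed and $K$ stays uniform. Repetitions in $J$ must be kept in the ordered tuple. The padded members $S^+\setminus S$ must be treated as unsampled, so their pads are fresh even though the coalition could see their code bits. I would handle all three by fixing $J$ first and writing both experiments as explicit functions of independent seeds.
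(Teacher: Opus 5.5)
Your proposal is correct and follows essentially the same route as the paper's proof. Both condition on $\Xi=(J,K_S,R_F,R_A)$ and enlarge the history by the past unsampled key columns $K_{\bar S,1},\ldots,K_{\bar S,j-1}$, so that $u^j_{\bar S}=q^j_{\bar S}\oplus K_{\bar S,j}$ is uniform and independent of the history, and both close the induction through identical simulator and generator kernels (with the padded members $S^+\setminus S$ correctly treated as unsampled).
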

\begin{proof}
Let $R_F,R_A$ be complete independent random seeds for the code
generator and simulator. In the real experiment, condition on
\begin{equation}
 \Xi=(J,K_S,R_F,R_A),\qquad \bar S=[d]\setminus S.
 \label{sm:eq:initial}
\end{equation}
The sampled rows $K_S$ include all their future entries. The remaining
matrix $K_{\bar S,*}$ is still independent and uniform. The coalition
experiment has the same distribution of $\Xi$.

Before round $j$, consider the proof filtration
\begin{equation}
 \mathcal H_{j-1}
 =\sigma(\Xi,K_{\bar S,1},\ldots,K_{\bar S,j-1}).
 \label{sm:eq:filtration}
\end{equation}
All past queries and answers are measurable with respect to it: the
simulator started from $K_S$ and has used only earlier queries.
The code generator uses $R_F$ and earlier response bits, so the
current $c^j$, $T_{j-1}$, and $q^j$ are also measurable. The unused
column $K_{\bar S,j}$ is independent of $\mathcal H_{j-1}$.
Consequently, for each $y\in\F^{d-|S|}$,
\begin{equation}
 \Pr[u^j_{\bar S}=y\mid\mathcal H_{j-1}]
 =\Pr[q^j_{\bar S}\oplus K_{\bar S,j}=y\mid\mathcal H_{j-1}]
 =2^{-(d-|S|)}.
 \label{sm:eq:uniform}
\end{equation}
Conditioning only on $\Xi$ and the past joint record leaves this law
uniform. Thus the unsampled ciphertext has exactly the law of the
fresh bits used by the coalition. At sampled positions the real and
simulated ciphertexts agree by construction, including at accused
positions.

Identical simulator and generator kernels now yield identical
conditional laws for the next answer, response bit, and accusations.
Starting at the empty history, induction and averaging over $\Xi$ give
\begin{equation}
 (J,S^+,\mathsf{Rec}_L)_{\rm real}
 \overset{\rm law}{=}(J,S^+,\mathsf{Rec}_L)_{\rm coalition},
 \qquad \mathsf{Rec}_L=(c^j,I_j,P_j,a_j,b_j)_{j=1}^L,
 \label{sm:eq:jointlaw}
\end{equation}
where $I_j=T_j\setminus T_{j-1}$ is the newly accused set. Since
$T_0=\varnothing$, the $I_j$ determine every $T_j$. Including $S^+$
therefore transfers both $\theta$ and $\psi=|T_L\setminus S^+|$.

For legality, the coalition was selected independently of $R_F$.
It receives code bits only for active members of $S^+$, ignores those
in $S^+\setminus S$, and uses known key bits at accused sampled
positions. It needs removal information only for its own members.
The filtration and the full code columns in the record are proof
variables, not extra information supplied to the coalition.
Conditioning on a fixed coalition and its private seed gives a finite
response table on the finite code histories. Averaging such tables
covers arbitrary internal computation and real-valued simulator
outputs, so the strategy falls within the fingerprinting guarantee.
\end{proof}

\section{Consistency, fixed-state quantifiers, and dimension fitting}\label{sm:sec:fixedproof}
We first use accurate means to control the code responses. The argument
works at tolerance $1/2$, which will also permit dilution.

\begin{lemma}[Accuracy implies consistency]\label{sm:lem:consistency}
For $0\le\tau\le1/2$, the attack satisfies
\begin{equation}
 \Acc_L^{(\tau)}(\rho_K)
 \subseteq\{\theta=0\}\cup\{\psi>(d-s)/2000\}.
 \label{sm:eq:containment}
\end{equation}
\end{lemma}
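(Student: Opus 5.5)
We prove the containment pointwise. We fix an outcome in $\Acc_L^{(\tau)}(\rho_K)$ on which $\psi\le(d-s)/2000$, and show that every response $b_j$ is then consistent, so $\theta=0$. The argument is deterministic given the realized record. It uses only Lemma~\ref{sm:lem:parity}, which gives the exact mean $\mu_j=\frac1d\sum_i(-1)^{q_i^j}$, together with the rule $b_j=\mathbf1\{a_j<0\}$.

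First, we bound the accusation sets. Because accusations only grow, $T_{j-1}\subseteq T_L$ for every round $j$. Also
\[
T_L\subseteq S^+\cup(T_L\setminus S^+),
\]
so $|T_{j-1}|\le s+\psi\le s+(d-s)/2000$. With $d=2000s$ from Eq.~\eqref{sm:eq:fixedcode}, this gives $|T_{j-1}|/d\le 3999/4\,000\,000<10^{-3}$.

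Next, we split on the type of the full column $c^j$.
\begin{itemize}
\item If $c^j$ contains both symbols, either response agrees with some entry, so $b_j$ is consistent automatically.
\item If $c^j$ is all zeros, then $q^j=0$ at every position, accused or not. Hence $\mu_j=1$, and accuracy gives $a_j\ge1-\tau\ge1/2>0$. So $b_j=0$, which agrees with the column.
\item If $c^j$ is all ones, then $q_i^j=1$ exactly for $i\notin T_{j-1}$. Hence $\mu_j=-1+2|T_{j-1}|/d\le-0.998$, and accuracy gives $a_j\le-0.998+1/2<0$. So $b_j=1$, which again agrees with the column.
\end{itemize}
Consistency is defined against the full column $c^j$, not the masked $q^j$. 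This is why the masking only enters through the bound on $|T_{j-1}|$.

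The only delicate point is the all-ones case. There the masked-out accused positions shift the mean toward $+1$, so we need the accusation count to stay small. This is exactly what the hypothesis $\psi\le(d-s)/2000$, together with $|S^+|=s$, provides. The remaining work is checking the numerical margin, $2\cdot3999/4\,000\,000+1/2<1$.

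Since every round gives a consistent response, $\theta=0$. Therefore every outcome in $\Acc_L^{(\tau)}(\rho_K)$ lies in $\{\theta=0\}\cup\{\psi>(d-s)/2000\}$, which is the claimed containment.
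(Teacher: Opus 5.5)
Your proof is correct and follows essentially the same route as the paper. Like the paper, you bound $|T_{j-1}|/d\le 3999/4\,000\,000$ using monotone accusation sets, $|S^+|=s$, and $\psi\le(d-s)/2000$. You then split on mixed, all-zero, and all-one full columns, and use $\mu_j=1$ and $\mu_j=-1+2|T_{j-1}|/d<-1/2$ to fix the sign of any answer accurate to $\tau\le1/2$.
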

\begin{proof}
Suppose $\psi\le(d-s)/2000$. The accusation sets are increasing,
and at most $s$ accused users belong to $S^+$. In every round,
\begin{equation}
 \frac{|T_{j-1}|}{d}
 \le\frac{s+(d-s)/2000}{d}
 =\frac{3999}{4\,000\,000}.
 \label{sm:eq:accused}
\end{equation}
If $c^j$ contains both symbols, either response is consistent.
If it is all zero, then $q^j$ is all zero and $\mu_j=1$.
If it is all one, masking only the accused entries gives
\begin{equation}
 \mu_j=-1+2|T_{j-1}|/d\le-0.9980005.
 \label{sm:eq:allone}
\end{equation}
An answer accurate to $\tau\le1/2$ is positive in the all-zero case
and negative in the all-one case. Therefore
$b_j=\mathbf1\{a_j<0\}$ agrees with the full column in both cases.
On simultaneous accuracy no round is inconsistent, so $\theta=0$.
This proves the containment.
\end{proof}

\subsection{A fixed hard state for each mechanism}\label{sm:sec:quantifiers}
The event on the right of Eq.~\eqref{sm:eq:containment} is determined
by the joint record in Lemma~\ref{sm:lem:coalition}. Its probability
is at most $\eta_s$ by Eq.~\eqref{sm:eq:codeparameters}. Thus, for every
$s$-copy mechanism and every $0\le\tau\le1/2$,
\begin{equation}
 \mathbb E_K\Pr[\Acc_L^{(\tau)}(\rho_K)\mid K]\le\eta_s<1/3.
 \label{sm:eq:averaging}
\end{equation}
Only the enclosing fingerprinting event is transferred; the coalition
need not construct the unsampled keys or their state. Equation
\eqref{sm:eq:averaging} implies that some fixed $K$ has success probability
at most $\eta_s$, in particular below $2/3$. The quantifier order is
\begin{equation}
 \forall\mathsf A\quad\exists K\quad\exists\mathsf{Adv}_K,
 \label{sm:eq:quantifiers}
\end{equation}
where $\mathsf A$ is the candidate mechanism and $\mathsf{Adv}_K$ is the
code analyst using that key. The state and analyst are chosen before
the mechanism's run. Averaging over keys is a proof step, and does not
make the input state change during the interaction.

\subsection{Proof of Theorem 1 for every parameter size}
\begin{proof}[Proof of Theorem~\ref{sm:thm:fixed}]
Set $B=2000+C_F$. Since $s\ge1$, the attack uses
\begin{equation}
 d+L\le2000s+C_Fs^2\le Bs^2,\qquad L\le Bs^2.
 \label{sm:eq:budget}
\end{equation}
Let $w=\min\{n,M\}$. If $w\ge4Bs_0^2$, choose
\begin{equation}
 s=\left\lfloor\sqrt{\frac{w}{B}}\right\rfloor.
 \label{sm:eq:floorchoice}
\end{equation}
Writing $x=\sqrt{\frac{w}{B}}$, the assumed threshold gives $x\ge2s_0\ge2$.
The elementary inequalities $x-1\le\lfloor x\rfloor\le x$ imply
\begin{equation}
 s\ge x-1\ge x/2\ge s_0,\qquad
 Bs^2\le w=\min\{n,M\},\qquad
 s\ge\frac{\sqrt w}{2\sqrt B}.
 \label{sm:eq:floorbounds}
\end{equation}
Thus both the qubit count and the query count fit. Fix unused qubits
in $\ket0$ and extend each attack query by identities. Pad the remaining
query rounds by identities. A universal guarantee for all $M$ rounds
would imply accuracy on the $L$-round attack prefix, even when the
mechanism knows $n$ and $M$ in advance. Equation~\eqref{sm:eq:averaging}
at $\tau=\eps$ rules out an $s$-copy mechanism. It also rules out every
smaller budget, since an $s$-copy mechanism could discard extra copies.
Consequently,
\begin{equation}
 \Ns(n,M)>s\ge\frac{\sqrt w}{2\sqrt B}.
 \label{sm:eq:fixedlarge}
\end{equation}

If $1\le w<4Bs_0^2$, at least one copy is necessary. Submit the same
first query $Z_1$ for the states $\ket0^{\otimes n}$ and
$\ket1\otimes\ket0^{\otimes(n-1)}$. Their means are $1$ and $-1$,
whose $\eps$-accuracy intervals are disjoint. A zero-copy mechanism
has the same first-answer distribution in both experiments; the two
disjoint success events cannot both have probability at least $2/3$.
Hence $\Ns(n,M)\ge1$. Since
$\sqrt w<2\sqrt B s_0$, this implies
$\Ns(n,M)\ge\sqrt w/(2\sqrt B s_0)$. Combining the two ranges proves
the theorem with $c=1/(2\sqrt B s_0)$.
\end{proof}

\section{Precision-dependent identification}\label{sm:sec:identification}
A fixed query list supplies a complementary lower bound whose precision
dependence also handles the small sizes in Theorem~\ref{sm:thm:precision}.
Set
\begin{equation}
 Q_{\mathrm{id}}=\min\{M,2^n-1\},\qquad R=Q_{\mathrm{id}}+1,\qquad \Delta=4\varepsilon\le1.
 \label{sm:eq:identparameters}
\end{equation}
Choose distinct nonzero $v_1,\ldots,v_{Q_{\mathrm{id}}}\in\F^n$ and define
\begin{equation}
 \omega_0=I/2^n,\qquad
 \omega_i=(I+\Delta Z^{v_i})/2^n\quad(1\le i\le Q_{\mathrm{id}}).
 \label{sm:eq:identstates}
\end{equation}
These states are positive and normalized: the eigenvalues are
$2^{-n}(1\pm \Delta)$ and each nonidentity parity has zero trace.
Character orthogonality gives
\begin{equation}
 \Tr(Z^{v_j}\omega_i)=\Delta\delta_{ij}\quad(i\ge1),\qquad
 \Tr(Z^{v_j}\omega_0)=0.
 \label{sm:eq:identmeans}
\end{equation}
The vectors need not be linearly independent. Indeed,
$2^{-n}\sum_x(-1)^{\ip{v_i\oplus v_j}{x}}$ is one if $i=j$ and zero
otherwise, which is exactly the orthogonality needed here.

Use the same fixed list $Z^{v_1},\ldots,Z^{v_{Q_{\mathrm{id}}}}$ for every label, and
pad by identities to reach $M$ queries. Decode label $i$ if its answer
is the unique one exceeding $2\varepsilon$, and decode zero otherwise.
On simultaneous accuracy, every zero-mean answer is at most
$\varepsilon$, while the one nonzero-mean answer, when present, is at
least $3\varepsilon$. Therefore each label is decoded correctly with
probability at least $2/3$.

\subsection{Information per sample}
Let $p_i$ be the basis distribution of $\omega_i$; $p_0$ is uniform.
For $i\ge1$,
$p_i(x)/p_0(x)=1+\Delta(-1)^{\ip{v_i}{x}}$. Relative entropy in bits obeys
\begin{align}
 D_2(p_i\Vert p_0)
 &=\frac1{\ln2}\sum_xp_i(x)\ln\frac{p_i(x)}{p_0(x)}\nonumber\\
 &\le\frac1{\ln2}\sum_xp_i(x)
       \left(\frac{p_i(x)}{p_0(x)}-1\right)
 =\frac1{\ln2}\sum_x\frac{(p_i(x)-p_0(x))^2}{p_0(x)}
 =\frac{\Delta^2}{\ln2}.
 \label{sm:eq:chi}
\end{align}
The inequality uses $\ln z\le z-1$. The middle equality follows
from $\sum_x(p_i(x)-p_0(x))=0$, and the last uses the displayed likelihood
ratio. At the endpoint $\Delta=1$, zero-probability terms are interpreted as
$0\log0=0$; the bound follows by continuity and $p_0$ has full support.

Let the label $V$ be uniform on $\{0,\ldots,Q_{\mathrm{id}}\}$, and let $Y$ be the
entire answer record. By Lemma~\ref{sm:lem:diagonal}, $Y$ is a channel from
$N$ conditionally independent samples $X^N$, with no other input
depending on the label. This last condition uses the same fixed query
list for all labels. Data processing and additivity give
\begin{align}
 I(V;Y)&\le I(V;X^N)\nonumber\\
 &\le\frac1R\sum_{i=0}^{Q_{\mathrm{id}}}
 D_2(p_i^{\otimes N}\Vert p_0^{\otimes N})
 \le\frac{N \Delta^2}{\ln2}.
 \label{sm:eq:information}
\end{align}
For completeness, let $\bar p_N=R^{-1}\sum_i p_i^{\otimes N}$.
Expanding the logarithm into a ratio relative to $\bar p_N$ and a
ratio of $\bar p_N$ to $p_0^{\otimes N}$ yields the exact identity
\begin{equation}
 \frac1R\sum_iD_2(p_i^{\otimes N}\Vert p_0^{\otimes N})
 =I(V;X^N)+D_2(\bar p_N\Vert p_0^{\otimes N}).
 \label{sm:eq:informationidentity}
\end{equation}
Nonnegativity of the second term proves the middle inequality in
Eq.~\eqref{sm:eq:information}. Joint quantum measurements are included
in the channel from $X^N$ to $Y$.

\subsection{Fano bound and the binary cases}
Let $\widehat V(Y)$ be the decoder and
$E=\mathbf1\{\widehat V(Y)\ne V\}$. When $E=0$, the label is known
from $Y$; when $E=1$, it has at most $R-1$ possible values. Since
$E$ is determined by $(V,Y)$, the conditional entropy chain rule gives,
for $p_e=\Pr(E=1)\le1/3$,
\begin{equation}
 H(V\mid Y)
 \le h_2(p_e)+p_e\log(R-1)
 \le1+\tfrac13\log R,
 \label{sm:eq:fano}
\end{equation}
where $h_2$ is binary entropy. If $R\ge8$, then $\log R\ge3$ and
\begin{equation}
 I(V;Y)\ge\tfrac23\log R-1\ge\tfrac13\log R,
 \qquad
 N\ge\frac{(\ln2)\log R}{48\varepsilon^2}.
 \label{sm:eq:identlarge}
\end{equation}
Here we substituted $\Delta^2=16\varepsilon^2$ into
Eq.~\eqref{sm:eq:information}.

For $2\le R<8$, compare labels zero and one. The event that the first
answer exceeds $2\varepsilon$ has probability $\pi_0\le1/3$ for
label zero and $\pi_1\ge2/3$ for label one. Applying data processing
to this event gives
\begin{equation}
 \begin{split}
 \frac{N \Delta^2}{\ln2}
 &\ge D_2(p_1^{\otimes N}\Vert p_0^{\otimes N})
 \ge D_2(\operatorname{Bern}(\pi_1)\Vert
          \operatorname{Bern}(\pi_0))\\
 &\ge D_2(\operatorname{Bern}(2/3)\Vert
          \operatorname{Bern}(1/3))=\tfrac13.
 \end{split}
 \label{sm:eq:binary}
\end{equation}
For $0<\pi_0<\pi_1<1$,
$D_2(\operatorname{Bern}(\pi_1)\Vert\operatorname{Bern}(\pi_0))$
increases in $\pi_1$ and decreases in $\pi_0$: its derivatives are
$\log[\pi_1(1-\pi_0)/(\pi_0(1-\pi_1))]>0$ and
$(\pi_0-\pi_1)/(\pi_0(1-\pi_0)\ln2)<0$.
The boundary cases follow by continuity, including infinite divergence.
Thus $N\ge\ln2/(48\varepsilon^2)>1/(72\varepsilon^2)$.
Since $\log R<3$ here, both ranges imply
\begin{equation}
 \Ns(n,M,\varepsilon)\ge\frac{\log R}{216\varepsilon^2}
 =\frac{\min\{n,\log(M+1)\}}{216\varepsilon^2}.
 \label{sm:eq:precisionidentify}
\end{equation}
The equality uses $R=\min\{M+1,2^n\}$. In particular, $n\ge2$ and
$M\ge1$ imply $\log R\ge1$, so this argument includes a single query.

At fixed accuracy, use the same construction with
$\Delta=4\eps$. Substituting $\varepsilon=\eps=10^{-3}$ into
Eq.~\eqref{sm:eq:precisionidentify} yields, in particular,
\begin{equation}
 \Ns(n,M)\ge\tfrac13\min\{n,\log(M+1)\}.
 \label{sm:eq:identify}
\end{equation}
This is a finite-list form of the high-entropy Pauli identification
method used in Ref.~\cite{HKP21}; the weak-bias construction above
provides a single proof for the full precision range.

\section{Finite-budget dilution and proof of Theorem 2}\label{sm:sec:dilution}
For the adaptive term, pad the state in Eq.~\eqref{sm:eq:state} to
$n$ qubits as in Sec.~\ref{sm:sec:fixedproof} and define
\begin{equation}
 \sigma_{K,p}=(1-p)\ket{0^n}\bra{0^n}+p\rho_K,
 \qquad p=2\varepsilon.
 \label{sm:eq:dilution}
\end{equation}
Every attack query is a $Z$ parity with eigenvalue $+1$ on
$\ket{0^n}$, so its diluted mean is $1-p+p\mu_j$.
The diluted analyst forwards the same queries and decodes each
mechanism answer $A_j$ by
\begin{equation}
 a_j=\Pi_{[-1,1]}\!\left(\frac{A_j-(1-p)}p\right),
 \label{sm:eq:rescale}
\end{equation}
where $\Pi$ clips to the indicated interval. Since $\mu_j\in[-1,1]$,
clipping cannot increase its error. An $\varepsilon$-accurate diluted
answer therefore gives an answer accurate to
$\varepsilon/p=1/2$ for $\mu_j$, as required by
Lemma~\ref{sm:lem:consistency}. The analyst uses only its key, public $p$,
and past answers, and passes $b_j=\mathbf1\{a_j<0\}$ to the code.

\subsection{A simulation with exactly \texorpdfstring{$s$}{s} hard-state copies}
Let $\mathsf A$ be a candidate mechanism using $N$ diluted copies.
To simulate it from $s$ copies of $\rho_K$, first measure these diagonal
copies in the computational basis, obtaining $X_1,\ldots,X_s$.
Before any query, draw independent flags
$F_1,\ldots,F_N\sim\operatorname{Bern}(p)$ and put
$N_{\mathrm h}=\sum_{r=1}^NF_r$. If $N_{\mathrm h}>s$, abort the internal simulation and
answer zero causally thereafter. Otherwise prepare the entire
$N$-register input of $\mathsf A$: use $\ket{0^n}$ when $F_r=0$,
and the next unused $\ket{X_k}$ when $F_r=1$. Run its instruments
with independent private randomness and return the rescaled answer
\eqref{sm:eq:rescale} at each step. This is a legal $s$-copy mechanism;
it is not given $K$ or a description of the sample distribution.

To compare the full interactions, extend the $X_k$ to an infinite
independent stream only in an ideal experiment. With the same flags
and private seeds, this experiment prepares exactly
$\sigma_{K,p}^{\otimes N}$ after averaging. On $N_{\mathrm h}\le s$, the ideal
and finite-budget inputs agree before the first query. If their past
records agree, the analyst selects the same next query and the
mechanism applies the same instrument with the same randomness;
therefore their next answers also agree. By induction, the complete
adaptive records agree except with probability at most $\Pr[N_{\mathrm h}>s]$.
This comparison uses the original unconditioned diluted input law.

Full accuracy implies accuracy of the $L$-round prefix. Apply
Eq.~\eqref{sm:eq:averaging} at $\tau=1/2$ to the finite-budget simulator
and then use the coupling. Since $N_{\mathrm h}$ is a nonnegative integer with
mean $Np$, Markov's inequality gives
\begin{equation}
 \mathbb E_K\Pr[\Acc_M^{(\varepsilon)}(\sigma_{K,p})\mid K]
 \le\eta_s+\Pr[N_{\mathrm h}>s]
 \le\eta_s+\frac{Np}{s+1}.
 \label{sm:eq:truncation}
\end{equation}
If $Np\le s/6$, the right side is strictly below
$1/3+1/6=1/2$. Some fixed key then defeats $\mathsf A$.
Consequently every universally valid mechanism must satisfy
\begin{equation}
 N>\frac{s}{12\varepsilon}.
 \label{sm:eq:dilutionbudget}
\end{equation}
The key, state, and diluted analyst are fixed before the mechanism's
run, with the same quantifier order as Eq.~\eqref{sm:eq:quantifiers}.

\subsection{Uniform constants, including small sizes}
\begin{proof}[Proof of Theorem~\ref{sm:thm:precision}]
Let $w=\min\{n,M\}$ and use $B$ and $s_0$ from
Sec.~\ref{sm:sec:fixedproof}. For $w\ge4Bs_0^2$, set
\begin{equation}
 s=\left\lfloor\sqrt{\frac{w}{B}}\right\rfloor.
 \label{sm:eq:precisionfloor}
\end{equation}
The full inequalities in Eq.~\eqref{sm:eq:floorbounds} give $s\ge s_0$,
$Bs^2\le w$, and $s\ge\sqrt w/(2\sqrt B)$. Thus the diluted attack
fits and Eq.~\eqref{sm:eq:dilutionbudget} yields
\begin{equation}
 N>\frac{\sqrt w}{24\sqrt B\varepsilon}.
 \label{sm:eq:precisionlarge}
\end{equation}
For $1\le w<4Bs_0^2$, Eq.~\eqref{sm:eq:precisionidentify} and
$\varepsilon\le1/4$ imply
\begin{equation}
 N\ge\frac1{216\varepsilon^2}
 \ge\frac1{54\varepsilon}
 \ge\frac{\sqrt w}{108\sqrt B s_0\varepsilon}.
 \label{sm:eq:precisionsmall}
\end{equation}
Combining both ranges gives the adaptive lower term uniformly.
Equation~\eqref{sm:eq:precisionidentify} gives the identification term
for the same universal task. The two hard-state families may differ:
every universally valid mechanism must satisfy both bounds, hence
their maximum. Taking
\begin{equation}
 c_1=\min\left\{\frac1{108\sqrt B s_0},\frac1{216}\right\}
 \label{sm:eq:precisionconstant}
\end{equation}
proves Eq.~\eqref{sm:eq:precisionlower}. All states used are diagonal and
all queries are $Z$ parities or identities.
\end{proof}

\section{Upper bounds and parameter conversions}\label{sm:sec:upper}
We give the conversions separately because they use two accuracy
conventions and draw on both published results and Chen et al.'s
online shadow-tomography study.
Proposition C2 of the published Ref.~\cite{HLHP26}
(Appendix C, Sec.~2, p.~010347-32), with the copy bound in Eq.~(C16),
applies for $\varepsilon,\delta\in(0,1/2)$ under a large-query
condition. For $M\ge2^{22}$, we use the explicitly parenthesized
sufficient condition
\begin{equation}
 \varepsilon\delta\ge
 4\exp\!\left[-\frac{M}{(\log M)^2(\log\log M)^4}\right].
 \label{sm:eq:hlhpcondition}
\end{equation}
The published expression uses compact iterated-log notation.
Even if its last factor is read literally as $\log[(\log M)^4]$,
Eq.~\eqref{sm:eq:hlhpcondition} suffices: writing $z=\log\log M>2$
gives $z^4\ge4z=\log[(\log M)^4]$ for either natural or binary
logarithms in this range. The larger denominator makes our condition
stronger.

The resulting bound is
$O(\varepsilon^{-4}\min\{\sqrt{M\log(1/(\varepsilon\delta))},
n\log(1/\delta)\})$ copies. At $\varepsilon=\eps$ and $\delta=1/3$,
a sufficient threshold for Eq.~\eqref{sm:eq:hlhpcondition} is
$M_0=2^{22}$, for either natural or binary logarithms in that condition.
To verify the binary case, put $x=\log_2 M$ and
$\Phi(M)=M/[(\log_2 M)^2(\log_2\log_2 M)^4]$.
For $x\ge22$,
\begin{equation}
 \begin{split}
 \frac{d}{dx}\ln \Phi(2^x)
 &=\ln2-\frac2x-\frac4{x\ln x}
 >\frac12-\frac1{11}-\frac2{33}>0,\\
 \Phi(M_0)&>\frac{2^{22}}{22^2\,5^4}>13>\ln(12000).
 \end{split}
 \label{sm:eq:hlhpthreshold}
\end{equation}
Here $\ln22>3$ and $\log_2 22<5$.
It follows that $4e^{-\Phi(M)}<1/3000=\eps/3$ for every $M\ge M_0$.
The positive logarithmic factors in the denominator are smaller with
natural logarithms on this range, so the same threshold suffices.
This is a conservative threshold, not the first admissible value.
Thus the $O(\sqrt M)$ branch holds at fixed accuracy for all
$M\ge M_0$. The remaining $1\le M<M_0$ are covered by direct
measurements, with a larger universal constant:
use a fresh batch for each round and a conditional failure budget of
$1/(3M)$. Concentration for the $\pm1$ outcomes requires
$O(\eps^{-2}\log(M+1))$ copies per query; a union bound remains valid
for the adaptive queries because each fresh batch is independent
of the previous record. For bounded $M$, this is $O(1)$ and hence
$O(\sqrt M)$.

Theorem 4 in Huang, Kueng, and Preskill's study of quantum advantage
in machine learning (Ref.~\cite{HKP21}) allows a fixed Pauli list of size $Q$
to be estimated using
$O(\varepsilon^{-4}\log(Q/\delta))$ copies. The protocol first uses
two-copy Bell measurements to estimate squared Pauli magnitudes, then
uses a separate quantum memory for gentle collective sign estimation
(Appendix E.2, including Lemma 5). The latter step supplies signed
expectations; Bell magnitudes alone do not. Apply the theorem to the full
list of $Q=4^n$ Paulis and choose $\delta=1/10$. This requires
$O(n/\varepsilon^4)$ copies. Its one simultaneous success event
controls the entire table; every subsequent adaptive query can be
answered by looking up its entry, for arbitrary $M$. At fixed
$\varepsilon=\eps$, this and the published Pauli protocol give
\begin{equation}
 \Ns(n,M)\le C_0\min\{\sqrt M,n\}.
 \label{sm:eq:published}
\end{equation}
For a fixed list of only $M$ Pauli queries, the same theorem instead
gives $O(\log(M+1))$ copies at fixed accuracy and confidence.

For the online protocols in Chen et al.'s online shadow-tomography study~\cite{COPW26}, convert
a Pauli query to an effect and set the effect accuracy to
\begin{equation}
 E_j=(I+P_j)/2,\qquad \varepsilon_{\mathrm{eff}}=\varepsilon/2.
 \label{sm:eq:effectconversion}
\end{equation}
If the protocol's estimate is $\widehat e_j$, return
$2\Pi_{[0,1]}(\widehat e_j)-1$. Since
$\Tr(P_j\rho)=2\Tr(E_j\rho)-1$, an $\varepsilon_{\mathrm{eff}}$-accurate estimate yields
an $\varepsilon$-accurate Pauli answer. The transformed analyst is
causal: it reconstructs the Pauli answers from the effect estimates
and then follows the original next-query rule.

For Hilbert-space dimension $2^n$, the parameter in those protocols obeys
\begin{equation}
 \kappa=\Theta(\log(2^n)/\varepsilon_{\mathrm{eff}}^2)
 =\Theta(n/\varepsilon^2).
 \label{sm:eq:kappa}
\end{equation}
Theorems 1.2 and 1.3 of Ref.~\cite{COPW26}, at constant joint failure
probability, then give the respective copy bounds
\begin{equation}
 O\!\left(\frac{\sqrt n}{\varepsilon^3}
       \log\!\left(M+\frac n{\varepsilon^2}+1\right)\right),
 \qquad O\!\left(\frac{\sqrt M}{\varepsilon^2}\right),
 \label{sm:eq:onlinebranches}
\end{equation}
with joint success at least $9/10$. Combining them with the published
full-table protocol gives, for the range of Theorem~\ref{sm:thm:precision},
\begin{equation}
 \Ns(n,M,\varepsilon)\le C_1\min\left\{
 \frac{\sqrt M}{\varepsilon^2},\frac n{\varepsilon^4},
 \frac{\sqrt n}{\varepsilon^3}
 \log\!\left(M+\frac n{\varepsilon^2}+1\right)\right\}.
 \label{sm:eq:precisionupper}
\end{equation}
The choice of protocol uses only the public parameters and is made
before receiving the state. Thus it gives a minimum of copy bounds.
At $\varepsilon=\eps$, Eq.~\eqref{sm:eq:onlinebranches} becomes
\begin{equation}
 \Ns(n,M)\le C\min\{\sqrt M,\sqrt n\log(M+n+1)\}.
 \label{sm:eq:online}
\end{equation}
The fixed-accuracy matching for $M\le n$ follows already from the
published bound \eqref{sm:eq:published}. Chen et al.'s online shadow-tomography study gives the stronger
online branches in Eqs.~\eqref{sm:eq:precisionupper} and \eqref{sm:eq:online}.
The lower and upper bounds retain a gap in their dependence on
$\varepsilon$.

\section{Consequences across query budgets}\label{sm:sec:regimes}
Combine Theorem~\ref{sm:thm:fixed} and Eq.~\eqref{sm:eq:identify}, and combine
the three upper protocols from Sec.~\ref{sm:sec:upper}. With
$r=\sqrt{\min\{n,M\}}$ and $\ell=\min\{n,\log(M+1)\}$, there are
universal constants $c',C'>0$ such that
\begin{align}
 \Ns(n,M)&\ge c'\max\{r,\ell\},
 \label{sm:eq:combinedlower}\\
 \Ns(n,M)&\le C'\min\{\sqrt M,\sqrt n\log(M+n+1),n\}.
 \label{sm:eq:combinedupper}
\end{align}
The lower maximum constrains one universal mechanism, even though its
hard states differ. The upper minimum selects a protocol before the
state is received.

For $1\le M\le n$, Theorem~\ref{sm:thm:fixed} and the published bound
\eqref{sm:eq:published} give
$\Ns(n,M)=\Theta(\sqrt M)$. In particular, $M=n$ costs
$\Theta(\sqrt n)$ copies, while a fixed list of $n$ queries costs
$O(\log n)$ at the same fixed accuracy and confidence.
For $M=\lceil n^\alpha\rceil$ with fixed $0\le\alpha\le1$,
$n^\alpha\le M\le2n^\alpha$ and $M\le n$, so
$\Ns(n,M)=\Theta(n^{\alpha/2})$. For fixed $\alpha>1$,
Eq.~\eqref{sm:eq:combinedupper} and Theorem~\ref{sm:thm:fixed} give
$c\sqrt n\le\Ns(n,\lceil n^\alpha\rceil)
\le C_\alpha\sqrt n\log n$. Dividing logarithms by $\log n$
removes the constant terms and $\log\log n/\log n$, proving
\begin{equation}
 \lim_{n\to\infty}
 \frac{\log\Ns(n,\lceil n^\alpha\rceil)}{\log n}
 =\frac12\min\{\alpha,1\}\qquad(\alpha\ge0).
 \label{sm:eq:exponent}
\end{equation}
For $\alpha=0$, the numerator is bounded and the same limit is zero.

If $M\ge2^{\gamma n}$ with fixed $\gamma>0$, then
$\ell\ge\min\{1,\gamma\}n$. Equation~\eqref{sm:eq:identify} gives
$\Ns(n,M)\ge\min\{1,\gamma\}n/3$, and the full table gives an
$O(n)$ upper bound. Thus $\Ns(n,M)=\Theta_\gamma(n)$.
Other budgets can retain a gap. For example,
$M=\lceil n\log n\rceil$ yields
$\Omega(\sqrt n)\le\Ns(n,M)\le O(\sqrt{n\log n})$.
Table~\ref{sm:tab:regimes} distinguishes constant-factor matching from
the polynomial-exponent statement.

\begin{table}[ht]
\caption{Fixed-accuracy copy bounds with simultaneous success at least
$2/3$. The exponents $\alpha>1$ and $\gamma>0$ are fixed.}
\label{sm:tab:regimes}
\begin{ruledtabular}
\begin{tabular}{lll}
Query budget & Lower bound & Upper bound\\
$1\le M\le n$ & $\Omega(\sqrt M)$ & $O(\sqrt M)$\\
$M=\lceil n^\alpha\rceil$, $\alpha>1$
 & $\Omega(\sqrt n)$ & $O_\alpha(\sqrt n\log n)$\\
$M\ge2^{\gamma n}$ & $\Omega_\gamma(n)$ & $O(n)$
\end{tabular}
\end{ruledtabular}
\end{table}

\section{Comparison with classical precision bounds}\label{sm:sec:classical}
For a bounded classical query $g_j$, let $p_{\mathrm{pop}}$ be the population
distribution and $g_j(\mathsf S)$ its empirical mean on a sample
$\mathsf S$. The relevant errors are
\begin{equation}
 \begin{aligned}
 \text{population accuracy:}&\quad |a_j-\mathbb E_{p_{\mathrm{pop}}}g_j|,\\
 \text{sample accuracy:}&\quad |a_j-g_j(\mathsf S)|,\\
 \text{post hoc generalization:}&\quad
 |g_j(\mathsf S)-\mathbb E_{p_{\mathrm{pop}}}g_j|.
 \end{aligned}
 \label{sm:eq:classicalcriteria}
\end{equation}
These impose different conditions. Here post hoc generalization is used
in the interactive-query sense of Definition 2.1 in Ref.~\cite{NSSSU18},
with a simultaneous guarantee over the queried functions for every
population distribution and adaptive analyst. On any event on which
all answers are both sample-accurate and population-accurate to
tolerance $\varepsilon$, the triangle inequality bounds every
corresponding discrepancy between sample and population means by
$2\varepsilon$. The resulting guarantee holds with at least the
probability of that joint event. Population accuracy alone does not
require sample accuracy.
The task in Sec.~\ref{sm:sec:task} requires population accuracy.

Published Theorem 1.2 of Nissim \emph{et al.}~\cite{NSSSU18} gives
$\Omega(\sqrt M/\varepsilon^2)$ samples for unrestricted statistical
queries under both population accuracy and post hoc generalization.
The population-only Theorem 13 in Lyu and Talwar's study of lower bounds
for adaptive data analysis~\cite{LT25} gives
$\Omega(\sqrt M/[\varepsilon\log(1/\varepsilon)])$ in its stated
parameter and domain regime, with universe size
$2^{O(\varepsilon^2M)}$. Removing the residual precision gap in this
dimension-independent population-only comparison remains open, as
discussed in Ref.~\cite{LT25} and Sec.~1.1 of Ref.~\cite{COPW26}.

\subsection{The finite-universe lower bound and its regime}\label{sm:sec:ltfinite}
Theorem 12 in Lyu and Talwar's study of lower bounds for adaptive data
analysis~\cite{LT25} concerns general bounded
queries with population accuracy alone. Its displayed rate is
\begin{equation}
 \Omega\!\left(\frac{\sqrt{\log U}\,\log M}
 {\varepsilon^2\log(1/\varepsilon)}\right),\qquad U=|\mathcal X|,
 \label{sm:eq:lt12rate}
\end{equation}
with $O(\log U)$ adaptive rounds and a positive constant probability
of forcing an inaccurate answer. Theorem 12 leaves this constant
unspecified. To contradict the $2/3$ success requirement in
Sec.~\ref{sm:sec:task}, a failure probability strictly greater than
$1/3$ would be needed; that numerical guarantee is not part of the
cited statement.
The printed assumptions include $U,M\ge1$, $0<\varepsilon<1/10$,
$M\ll2^U$, $U\ll2^M$, and $1/\varepsilon\ll\min\{U,M\}$.
To apply the rate, one must also fit the construction parameters in
the chosen domain and query budget. Write $d_{\rm LT},m_{\rm LT}$
for its two integer dimensions and $h_{\rm LT}=\Theta(\varepsilon^{-2})$
for its auxiliary dimension. Sections 3.1--3.2 and 4.4--4.5 of that
study use a base universe and total query count
\begin{equation}
 U_0=2^{d_{\rm LT}}m_{\rm LT}h_{\rm LT},\qquad
 Q_0=2^{m_{\rm LT}}h_{\rm LT}d_{\rm LT},\qquad
 W=\left(\frac{\log U_0\log Q_0}{\varepsilon^3}\right)^{O(1)}.
 \label{sm:eq:ltparameters}
\end{equation}
The obfuscated universe has $U_0W$ points. The conditions
$U_0W\le U$ and $Q_0\le M$ suffice to embed the constructed attack;
unused labels and queries can be padded. Recovering the rate in
Eq.~\eqref{sm:eq:lt12rate} additionally requires
$d_{\rm LT}=\Theta(\log U)$ and $m_{\rm LT}=\Theta(\log M)$.
The total count $Q_0$ already includes the $d_{\rm LT}$ stages, each
with $2^{m_{\rm LT}}h_{\rm LT}$ queries. Proposition 5 and the
population-only reduction in Sec.~4.4.2 give the rate
$\Omega(\sqrt{d_{\rm LT}}\,m_{\rm LT}/[\varepsilon^2\log(1/\varepsilon)])$.
These constraints matter: for $U=2^n$ and $M=\lceil n\log n\rceil$
at fixed accuracy, $d_{\rm LT}=\Theta(n)$ and $Q_0\le M$ force
$m_{\rm LT}=O(\log\log n)$, so substitution of a full $\log M$
factor is not justified there.

We now instantiate their construction for each fixed $\alpha>1$ and
fixed $0<\varepsilon<1/10$, with $U=2^n$ and
$M=\lceil n^\alpha\rceil$.
For sufficiently large $n$, choose
\begin{equation}
 d_{\rm LT}=\lfloor n/2\rfloor,\qquad
 m_{\rm LT}=\left\lfloor\frac{\alpha-1}{2}\log_2 n\right\rfloor,
 \qquad h_{\rm LT}=\Theta(\varepsilon^{-2}).
 \label{sm:eq:ltpolynomial}
\end{equation}
Round $h_{\rm LT}$ to an available Hadamard dimension, changing only
constant factors. Then
$Q_0=O_\varepsilon(n^{(\alpha+1)/2})\le\lceil n^\alpha\rceil$.
Moreover, $\log U_0=\Theta(n)$ and
$\log Q_0=\Theta_\alpha(\log n)$, so
\begin{equation}
 U_0W=2^{\lfloor n/2\rfloor}\operatorname{poly}_{\varepsilon,\alpha}(n)
 \le2^n.
 \label{sm:eq:ltuniverse}
\end{equation}
With the same constant failure guarantee, this gives
$\Omega_{\varepsilon,\alpha}(\sqrt n\log n)$ samples for general
bounded queries. This parameter specialization is our deduction from
the cited construction. Its constants need not be uniform as
$\alpha$ approaches one, and it inherits the source's failure
probability without converting it to the $2/3$-success task.

\subsection{What a transfer to parity queries would require}
The complete Lyu--Talwar attack composes each base query with the
inverse of a random, prefix-dependent input obfuscation and sets it
to $+1$ on points selected using previous answers. Their reduction
does not show that these operations preserve single parities on
$O(n)$ fixed bits. The static base queries alone admit elementary
parity encodings, but those encodings do not provide the information
hiding used in the adaptive reduction. A short binary label for a
universe point likewise does not make every bounded function a parity.

A transfer would require a single initial encoding on $O(n)$ bits,
single-parity queries at every stage, a causal query rule, and
preservation of the mechanism's information and population means.
If it also established failure probability strictly greater than $1/3$, it
would transfer the $\Omega_\alpha(\sqrt n\log n)$ fixed-accuracy rate
for every fixed $\alpha>1$ and close the logarithmic interval in
Sec.~\ref{sm:sec:regimes}. This is a conditional open problem, not a
parity lower bound established here.

For parities,
Ref.~\cite{HLHP26} already gives a finite-dimensional lower bound via
diagonal states and $Z$ queries. The additive encoding improves that
bound's dependence on the number of bits per sample, from the
$\min\{\sqrt M,n^{1/3}\}$ baseline to $\sqrt{\min\{n,M\}}$ at fixed
accuracy. The restricted parity representation and the
population-only precision dependence are distinct parts of this
comparison.

\clearpage
\endgroup
\end{document}